\newtheorem{thm}{Theorem}
\newtheorem*{thm*}{Theorem}
\newtheorem{prop}[thm]{Proposition}
\newtheorem*{prop*}{Proposition}
\newtheorem*{lemma*}{Lemma}
\newtheorem{cor}[thm]{Corollary}
\newtheorem*{cor*}{Corollary}
\newtheorem*{ex*}{Example}
\newtheorem*{cj*}{Conjecture}
\newtheorem{Def}[thm]{Definition}
\newtheorem*{Def*}{Definition}
\theoremstyle{definition}
\newtheorem*{rem}{Remark}
\newcommand{\bq}{\begin{equation*}}
\newcommand{\be}{\begin{equation}}
\newcommand{\eq}{\end{equation*}}
\newcommand{\ee}{\end{equation}}
\newcommand{\bmu}{\begin{multline*}}
\newcommand{\emu}{\end{multline*}}
\newcommand{\ban}{\begin{align*}}
\newcommand{\bal}{\begin{align}}
\newcommand{\ean}{\end{align*}}
\newcommand{\eal}{\end{align}}
\newcommand{\Tr}{\text{Tr}\,}
\newcommand{\rk}{\text{rk}\,}
\newtcolorbox[auto counter]{example}[3][]
{float*=ht,title=Observation ~\thetcbcounter: #2,label= ex:#3 ,#1}
\begin{document}

\title{Bipartite depolarizing maps}
\author{Ludovico Lami}
\affiliation{Universitat Aut\`onoma de Barcelona, 08193 Bellaterra, Barcelona, Spain}
\author{Marcus Huber}
\affiliation{Group of Applied Physics, University of Geneva, 1211 Geneva 4, Switzerland}
\affiliation{Universitat Aut\`onoma de Barcelona, 08193 Bellaterra, Barcelona, Spain}

\begin{abstract}
We introduce a $3$--parameter class of maps \eqref{Phi} acting on a bipartite system that are a natural generalisation of the depolarizing channel (and include it as a special case). Then, we find the exact regions of the parameter space that alternatively determine a positive, completely positive, entanglement--breaking or entanglement--annihilating map. This model displays a much richer behaviour than the one shown by a simple depolarizing channel, yet it stays exactly solvable. As an example of this richness, PPT but not entanglement--breaking maps are found in Theorem \ref{EB region}. A simple example of a positive yet indecomposable map is provided (see the Remark at the end of Section \ref{sec EB}). The study of the entanglement--annihilating property is fully addressed by Theorem \ref{EA}. Finally, we apply our results to solve the problem of the entanglement annihilation caused in a bipartite system by a tensor product of local depolarizing channels. In this context, a conjecture posed in \cite{EA4} is affirmatively answered, and the gaps that the imperfect bounds of \cite{EA3} left open are closed. To arrive at this result we furthermore show how the Hadamard product between quantum states can be implemented via local operations.
\end{abstract}

\maketitle

\section{Introduction} \label{sec intro}
Entanglement appears as the central pillar of almost every quantum communication protocol \cite{nielsen}. It is however a very fragile resource and once distributed, the local participants in any kind of communication protocol can never increase it via local operations, even if aided by unbounded classical communication (these operations are abbreviated under the acronym LOCC). Much effort has thus been invested into characterizing the resource of entanglement \cite{horodeckiqe,siewert}. It is in general an NP-hard problem to decide whether a given quantum state exhibits entanglement at all \cite{gurvits}, even if the full state is known with perfect precision. This precludes a complete characterization of all aspects in full generality, but nonetheless some paradigmatic cases can be resolved in a practically useful way. Complete solutions for the separability problem exist for various special classes of states \cite{WernerOrig,Iso-dep,VollbrechtWerner,SPA,MaxAbelian,Marcus1,axi,Buchy,SepBos} and through mapping arbitrary states into these classes one can get useful bounds for general states. A central tool for addressing these problems are positive maps, the canonical example being transposition, from which the probably most widely used entanglement test derives \cite{Peres}. Apart from quantifying \cite{siewert} and detecting entanglement \cite{guhnetoth}, another important problem concerns the transformation of entanglement.

In general, any physical transformation of a quantum system can be represented by a completely positive (CP) map. Studying which CP maps preserve and more importantly which maps adversely affect entanglement is of great importance to understanding the limitations of sending quantum states through different channels. Two important examples are maps that break the entanglement between two quantum systems (entanglement--annihilating) and between a single quantum system and the rest of the world (entanglement--breaking). 

Among the most significant quantum channel we can undoubtedly include the depolarizing channel, defined on a $d$--dimensional system by the formula $\Delta_\lambda\equiv \lambda I +(1-\lambda)\frac{\mathds{1}}{d}\Tr$. Its usefulness is twofold: on the one hand, it represents a physically meaningful model of the white noise acting on an isolated system; on the other hand, it constitutes a mathematically treatable yet nontrivial example of quantum channel. We remind the reader \cite{Iso-dep} that the map $\Delta_\lambda$ is:
\begin{itemize}
\item positive iff $-\frac{1}{d-1}\leq \lambda\leq 1$;
\item completely positive (CP) iff $-\frac{1}{d^2-1}\leq \lambda\leq 1$;
\item completely copositive (coCP) iff $-\frac{1}{d-1}\leq \lambda\leq \frac{1}{d+1}$;
\item entanglement--breaking iff it is at the same time CP and coCP, iff $-\frac{1}{d^2-1}\leq\lambda\leq \frac{1}{d+1}$. 
\end{itemize}
In particular, it is useful to remember that whenever $\Delta_\lambda$ is positive it is also either completely positive or completely copositive (or both). These facts are used several times in the rest of the paper.

In this manuscript we study a three--parameter class of maps that generalises the depolarizing channel, one of the most studied transformations in the context of noisy communication. Despite the richer structure that class displays, we present an exact solution in terms of all parameters $\alpha, \beta, \gamma$ for which this map is \emph{positive}; \emph{completely positive}; \emph{entanglement--breaking} and \emph{entanglement--annihilating}. This characterization includes a new example of a positive map, which is not decomposable, i.e. capable of detecting entangled states which remain positive under partial transposition (PPT). To arrive at this map we point out a physically interesting Lemma: The Hadamard product between two arbitrary quantum states can be stochastically implemented via LOCC.

As an application of the entanglement--annihilating characterization, we answer some open questions recently raised in \cite{EA3,EA4} concerning the minimum amount of local depolarizing noise leading to a global entanglement annihilation when applied on both sides of a bipartite system. \\

To start, let us consider the following maps acting on a bipartite system $AB$:
\be \Phi[\alpha,\beta,\gamma]\ \equiv\ \mathds{1}_{AB} \Tr +\, \alpha\, \mathds{1}_{A} \Tr \otimes I\, +\, \beta\, I\otimes \mathds{1}_{B} \Tr +\, \gamma\, I\ . \label{Phi} \ee

As a first remark, one could wish to consider a more general linear combination than \eqref{Phi}, with another parameter for the coefficient of $\mathds{1}_{AB} \Tr$. However, already the positivity condition easily implies that such a parameter must be non--negative. Up to a normalization constant, one can take it to be either $1$ or $0$. The latter case can be deduced from the former, because $\alpha\, \mathds{1}_{A} \Tr \otimes I\, +\, \beta\, I\otimes \mathds{1}_{B} \Tr +\, \gamma\, I$ is positive iff $\frac{1}{M}\, \mathds{1}_{AB}\Tr + \alpha\, \mathds{1}_{A} \Tr \otimes I\, +\, \beta\, I\otimes \mathds{1}_{B} \Tr +\, \gamma\, I$ is positive for all $M>0$, iff $\Phi[M\alpha,M\beta,M\gamma]$ is positive for all $M>0$.

Observe that the maps $\Phi$ we defined commute with all the local unitary conjugations. But there is more: up to including a coefficient for the first addend (which is an irrelevant modifications in the sense clarified above), these maps are \emph{all} the maps on $AB$ displaying this feature, as the following reasoning shows. Before going into the details, we remind the reader that the states of a bipartite system $AA'$ commuting with all the conjugations by local unitaries of the form $U_A\otimes U^*_{A'}$ are exactly the linear combinations of the identity $\mathds{1}_{AA'}$ and the maximally entangled state $\ket{\varepsilon}\!\!\bra{\varepsilon}_{AA'}$. Now, up the application of the Choi--Jamiolkowski isomorphism, we have to prove that the set of $ABA'B'$ states that commute with conjugations by $U_A\otimes V_B \otimes U^*_{A'}\otimes V^*_{B'}$ coincides with the set of linear combinations of $\mathds{1}_{ABA'B'}$, $\mathds{1}_{AA'}\!\otimes\! \ket{\varepsilon}\!\!\bra{\varepsilon}_{BB'}$, $\ket{\varepsilon}\!\!\bra{\varepsilon}_{AA'}\!\!\otimes\! \mathds{1}_{BB'}$, and $ \ket{\varepsilon}\!\!\bra{\varepsilon}_{AA'}\!\!\otimes \!\ket{\varepsilon}\!\!\bra{\varepsilon}_{BB'}$, i.e. with the tensor product of the two locally invariant subspaces of $AA'$ and $BB$.
This is nothing but a particular instance of a more general phenomenon: given two representations of compact groups $\zeta_i: G_i\rightarrow GL(V_i)$ (with $i=1,2$), the invariant subspace of the \emph{external} tensor product $\zeta_1\boxtimes \zeta_2$ is the tensor product of the two $\zeta_i$--invariant subspaces; in fact, the latter is trivially contained in the former, while at the same time the dimensions are equal, thanks to character theory:
\bq \braket{\chi_1,\, \chi_{\zeta_1\boxtimes\zeta_2}}\ =\ \int dg_1 dg_2\ \Tr[(\zeta_1\otimes\zeta_2)\,(g_1,g_2)]\ =\ \int dg_1 dg_2\ \Tr \zeta_1(g_1)\, \Tr \zeta_2(g_2)\ =\ \braket{\chi_1,\, \chi_{\zeta_1}}\, \braket{\chi_1,\, \chi_{\zeta_2}} . \eq 
In our case, the two groups are $G_1=SU(d_A)$ and $G_2=SU(d_B)$, the two spaces are the set of hermitian operators on $AA'$ and $BB'$, while the representations are defined through $\zeta_1(U)(X_{AA'})\, =\, U_A\otimes U^*_{A'}\ X_{AA'}\ U^\dag_A\otimes U^T_{A'}$ and analogously for $BB'$. \\
Note that for $d_A=d_B$ the separability properties of the Choi states corresponding to \eqref{Phi} have been already considered in \cite{ChrusKoss}. In that paper, the entanglement--breaking conditions for \eqref{Phi} were found under the above simplifying assumption $d_A=d_B$. However, we will see that the general case $d_A\neq d_B$ is way more interesting, because new phenomena such as the existence of PPT entangled states appear. \\
As usual, we will denote by $n$ the minimum between the two local dimensions $d_A, d_B$, i.e. the maximum Schmidt rank of a global pure state. A trick that turns out to be useful in analysing the above $\Phi[\alpha,\beta,\gamma]$ requires the construction of the following family of maps acting on an $n$--dimensional system:
\be
\chi[a,c] \equiv\ \mathds{1}_{n} \Tr +\, a\, D\, +\, c\, I\ ,  \label{chi}
\ee
where $D$ is the projection onto the diagonal, i.e. $D(X)=\mathds{1}\circ X$ and the symbol $\circ$ denotes the Hadamard product.

\begin{example}{The Hadamard Product}{entanglement_2qubit}
The element-wise product between two bipartite density matrices $\rho_{AB}$ and $\sigma_{AB}$ can be stochastically implemented via local operations and classical communication in the following way:
\begin{equation}
\rho_{AB}\circ\sigma_{A'B'}\,=\, \Pi^0_{AA'}\otimes\Pi^0_{BB'}\, \rho_{AB}\otimes\sigma_{A'B'}\, \Pi^0_{AA'}\otimes\Pi^0_{BB'}\, , \label{Had LOCC}
\end{equation}
where $\Pi^0_{AA'}=\sum_{i=1}^{d_A}\ket{ii}\!\!\bra{ii}$ and analogously for $\Pi^0_{BB'}$. A trivial consequence of this observation is the fact that the Hadamard product between separable states written in a local basis is again separable. In appendix \ref{app Schur}, we point out several other properties of the Hadamard product, which we believe are of interest for future research, although they are not directly used in this paper.
\end{example}

\section{Positivity} \label{sec P}

The most direct determination of the positivity region of $\Phi [\alpha,\beta,\gamma ]$ is contained in the proof of Theorem \ref{positivity}. However, we will initially follow a longer path to the main result, passing through Proposition \ref{red pos prob} and Theorem \ref{pos chi}. This is instructive because it shows the link between $\Phi$ and $\chi$ and leads to nice geometrical intuitions. The first important observation we present is indeed the reduction of the problem of positivity for $\Phi[\alpha,\beta,\gamma]$ to the analogous problem for $\chi[a,c]$.

\begin{prop} \label{red pos prob}
The map $\Phi[\alpha,\beta,\gamma]$ is positive iff $\alpha,\beta\geq -1$ and $\chi[\alpha+\beta, \gamma]$ is positive.
\end{prop}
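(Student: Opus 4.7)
The plan is to reduce positivity of $\Phi[\alpha,\beta,\gamma]$ to a block-diagonal problem by exploiting the Schmidt decomposition. By linearity and convexity, it is enough to check $\Phi[\ket{\psi}\!\!\bra{\psi}_{AB}] \geq 0$ on pure states. I would Schmidt-decompose $\ket{\psi} = \sum_{i=1}^r \sqrt{p_i}\,\ket{e_i f_i}$ and complete $\{\ket{e_i}\}$, $\{\ket{f_j}\}$ to orthonormal bases of $A$ and $B$, then write $\Phi[\ket{\psi}\!\!\bra{\psi}]$ in the product basis $\{\ket{e_i f_j}\}$.

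The key observation is that in this basis $\Phi[\ket{\psi}\!\!\bra{\psi}]$ is block-diagonal: the ``Schmidt subspace'' $\mathrm{span}\{\ket{e_i f_i} : 1 \leq i \leq n\}$ is invariant, and its orthogonal complement splits into one-dimensional invariant subspaces spanned by the individual vectors $\ket{e_i f_j}$ with $i \neq j$. On each one-dimensional block the eigenvalue equals $1 + \alpha\, p_j + \beta\, p_i$, which is non-negative for every admissible Schmidt spectrum iff $\alpha, \beta \geq -1$: necessity follows by testing on pure product states, while sufficiency uses the bound $\alpha p_j + \beta p_i \geq -(p_i + p_j) \geq -1$.

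The Schmidt block itself, in the basis $\{\ket{e_i f_i}\}_{i=1}^n$, has entries
\[
M_{ij} \;=\; \delta_{ij}\bigl(1 + (\alpha+\beta)\, p_i\bigr) + \gamma \sqrt{p_i\, p_j}\,,
\]
which one recognises as $\chi[\alpha+\beta,\gamma]$ evaluated on $\ket{\phi}\!\!\bra{\phi}$ with $\ket{\phi} = \sum_{i=1}^n \sqrt{p_i}\,\ket{i} \in \mathbb{C}^n$. From this identification the \emph{if} direction is immediate: positivity of $\chi[\alpha+\beta,\gamma]$ together with $\alpha,\beta \geq -1$ forces every block to be positive semidefinite.

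For the converse, I would exploit the fact that $\chi[a,c]$ commutes with conjugation by any monomial (i.e.\ diagonal times permutation) unitary, since both $D$ and $I$ manifestly do. Hence checking positivity of $\chi[\alpha+\beta,\gamma]$ reduces to testing it on pure states $\ket{\phi}\!\!\bra{\phi}$ with $\phi_i \geq 0$ real, and these are exactly the Schmidt vectors produced by the construction above; their positivity is delivered by the assumed positivity of $\Phi$ on the corresponding bipartite pure states. The main subtlety, which deserves careful bookkeeping, is the block-diagonal decomposition of $\Phi[\ket{\psi}\!\!\bra{\psi}]$; once that is in hand, the rest of the argument is a direct verification.
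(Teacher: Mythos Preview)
Your proposal is correct and follows essentially the same route as the paper: Schmidt-decompose the input pure state, observe the block-diagonal structure of $\Phi(\ket{\Psi}\!\!\bra{\Psi})$, identify the off-diagonal $1\times 1$ blocks with the condition $\alpha,\beta\geq -1$, and recognise the Schmidt block as $\chi[\alpha+\beta,\gamma](\ket{\phi}\!\!\bra{\phi})$. The only differences are that you spell out two points the paper leaves implicit: the sufficiency of $\alpha,\beta\geq -1$ for the off-diagonal entries via $p_i+p_j\leq 1$, and the monomial-unitary symmetry of $\chi$ that explains why testing it on non-negative real vectors $\ket{\phi}$ (i.e.\ exactly those arising as Schmidt data) is enough to certify its positivity.
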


\begin{proof}
Clearly, in order to ensure the positivity of $\Phi$ (parameters are understood for brevity) we have to require that $\Phi (\ket{\Psi}\!\!\bra{\Psi} )\geq 0$ for all global pure states $\ket{\Psi}$. Since $\Phi$ commutes with the conjugation by local unitaries, we can suppose that the input state is in Schmidt normal form, i.e.
\bq \ket{\Psi}\ =\ \sum_{i=1}^n \sqrt{\lambda_i}\, \ket{ii}\, . \eq
Then,
\begin{align}
\Phi (\ket{\Psi}\!\!\bra{\Psi} )\ &=\ \mathds{1}\ +\ \sum_{i,j} (\alpha \lambda_j + \beta \lambda_i) \ket{ij}\!\!\bra{ij}\ +\ \gamma \ket{\Psi}\!\!\bra{\Psi}\ =\\
&=\ \sum_{i\neq j} (1+\alpha \lambda_j + \beta \lambda_i) \ket{ij}\!\!\bra{ij} \ +\ \sum_{ij} \left( \left(1+(\alpha+\beta)\lambda_i \right)\delta_{ij}\ +\ \gamma\, \sqrt{\lambda_i \lambda_j} \right) \ket{ii}\!\!\bra{jj}\ .
\end{align}
From the above block decomposition it is apparent that $\Phi (\ket{\Psi}\!\!\bra{\Psi} )\geq 0$ iff $1+\alpha \lambda_j + \beta \lambda_i\geq 0$ and
\bq \sum_{ij} \left( \left(1+(\alpha+\beta)\lambda_i \right)\delta_{ij}\ +\ \gamma\, \sqrt{\lambda_i \lambda_j} \right) \ket{i}\!\!\bra{j}\ \geq\ 0\ . \eq
Once we require the validity of these conditions for all the probability distributions $\lambda$, the first one reads $\alpha,\beta\geq -1$. At the same time, the above matrix is exactly
\bq \chi[\alpha+\beta,\gamma]\,(\ket{\psi}\!\!\bra{\psi})\ , \eq
with $\ket{\psi}\equiv\sum_i \sqrt{\lambda_i}\ket{i}$, and therefore we are led to require the positivity of the map $\chi[\alpha+\beta,\gamma]$ as the second and last condition. 
\end{proof}

The above proposition reduces our main problem to the question of determining what is the region of positivity of the map \eqref{chi}. Now, we proceed to show how to solve this latter problem.

\begin{thm} \label{pos chi}
The map $\chi[a,c]$ defined by \eqref{chi} is positive iff
\be
\begin{split}
a \geq 0 \qquad &\text{and} \qquad c+\frac{a}{n}+1\, \geq\, 0\ ,\\
&\text{or}\\
-2 \leq a < 0 \qquad &\text{and}\qquad a+c+1\,\geq\, 0\ .
\end{split} \label{pos chi eq}
\ee
\end{thm}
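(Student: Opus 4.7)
The plan is to reduce positivity of $\chi[a,c]$ to a family of spectral conditions indexed by probability distributions on $[n]$, and then match both the necessary and the sufficient constraints with \eqref{pos chi eq}. First, I would invoke convexity together with the invariance of $\chi[a,c]$ under conjugation by diagonal unitaries (note that $\mathds{1}\Tr$, $D$, and $I$ all commute with such conjugations) to reduce positivity to the assertion that
\[
M(p)\,:=\,\chi[a,c](\ket{\psi}\!\!\bra{\psi})\,=\,\mathrm{diag}(1+a p_i)\,+\,c\,\ket{\psi}\!\!\bra{\psi}\,\geq\, 0
\]
for every probability vector $p$, where $\ket{\psi}=\sum_i\sqrt{p_i}\ket{i}$. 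Necessity is then read off by inserting three specific distributions into the criterion: $p$ concentrated on a single index gives $1+a+c\geq 0$; the uniform $p_i=1/n$ yields spectrum $\{1+a/n,\,1+a/n+c\}$, forcing both entries to be non-negative; and the uniform distribution on two indices (available since $n\geq 2$) forces $1+a/2\geq 0$, i.e.\ $a\geq -2$. A short case analysis repackages these inequalities as the two regions in \eqref{pos chi eq}.

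For sufficiency I would split on the sign of $a$. For $a\geq 0$, the matrix $N:=\mathrm{diag}(1+a p_i)$ is strictly positive and the weighted Cauchy--Schwarz inequality
\[
|\braket{\psi|v}|^2\,\leq\,\bra{v}N\ket{v}\cdot f(p),\qquad f(p)\,:=\,\sum_i \frac{p_i}{1+a p_i},
\]
implies $\bra{v}M(p)\ket{v}\geq \bra{v}N\ket{v}\,(1+c\,f(p))$ whenever $c\leq 0$. Concavity of $f$ (since $\partial^2 f/\partial p_i^2=-2a/(1+a p_i)^3\leq 0$) pins its maximum over the simplex at the uniform distribution with $f_{\max}=n/(n+a)$, exactly matching the threshold $c+a/n+1\geq 0$; the subcase $c\geq 0$ is trivial. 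For $-1\leq a<0$ the same estimate remains valid, but $f$ is now convex, so its maximum on the simplex shifts to the vertices $p=e_i$, producing $f_{\max}=1/(1+a)$ and the condition $a+c+1\geq 0$.

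The main obstacle is the regime $-2\leq a<-1$, in which $N$ is generally indefinite and the Cauchy--Schwarz bound loses its meaning. My plan is to exploit a convex decomposition at the boundary $c=-a-1>0$,
\[
M(a,-a-1;\,p)\,=\,(2+a)\,\mathrm{diag}(1-p_i)\,+\,(-a-1)\bigl[\mathrm{diag}(1-2 p_i)\,+\,\ket{\psi}\!\!\bra{\psi}\bigr],
\]
whose coefficients lie in $[0,1]$ and sum to one. The first summand is manifestly positive, and the second is precisely $M(-2,1;p)$, whose positivity I would establish via the explicit sum-of-squares identity
\[
\sum_i (1-2 p_i)|u_i|^2\,+\,|\braket{\psi|u}|^2\,=\,\sum_{i<j}\bigl|\sqrt{p_j}\,u_i\,+\,\sqrt{p_i}\,u_j\bigr|^2\,\geq\,0,
\]
obtained by direct expansion and the normalization $\sum_j p_j=1$. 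Finally, for $c>-a-1$ one simply writes $M(a,c;p)=M(a,-a-1;p)+(c+a+1)\ket{\psi}\!\!\bra{\psi}$, a sum of two positive matrices, which closes the sufficiency argument over the whole region \eqref{pos chi eq}.
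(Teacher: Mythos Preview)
Your argument is correct. The reduction to $M(p)$ and the treatment of the case $a\geq 0$ coincide with the paper's: both compute $f(p)=\sum_i p_i/(1+ap_i)$, appeal to its concavity, and locate the maximum at the uniform distribution. For $a<0$, however, your route diverges from the paper's main proof. The paper handles the whole range $-2\leq a<0$ in one stroke via the scalar inequality
\[
\sum_i |z_i|^2\ \leq\ \tfrac12\Big(\big(\textstyle\sum_i|z_i|\big)^2+\big|\textstyle\sum_i z_i\big|^2\Big),
\]
applied with $z_i=\sqrt{p_i}\,v_i$, which directly yields $\langle v|M(p)|v\rangle\geq (1+a+c)k+(1+\tfrac{a}{2})(1-k)$ for $k=|\langle\psi|v\rangle|^2$. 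You instead split: on $-1\leq a<0$ you keep the Cauchy--Schwarz bound and use \emph{convexity} of $f$ to push its maximum to the vertices, while on $-2\leq a<-1$ you reduce by a convex combination to the extreme point $(a,c)=(-2,1)$ and certify that case with the exact sum-of-squares identity
\[
\sum_i(1-2p_i)|u_i|^2+|\langle\psi|u\rangle|^2=\sum_{i<j}\bigl|\sqrt{p_j}\,u_i+\sqrt{p_i}\,u_j\bigr|^2.
\]
This identity is sharper than what the paper needs and gives an explicit positivity certificate for $\chi[-2,1]$; it is in the same spirit as the paper's \emph{Remark} following the theorem, which also isolates $\chi[-2,1]$ as the extreme point but proves its positivity by writing it as a Hadamard square $(\mathds{1}\Tr-I)^{\circ 2}$ of a completely copositive map. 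The trade-off is that your argument requires the additional subcase split at $a=-1$ (and a tacit continuity step there, since $f$ blows up at a vertex when $a=-1$), whereas the paper's geometric inequality is uniform over the whole interval. Both approaches are self-contained and of comparable length.
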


\begin{proof}
Suppose first that $a\geq 0$. Denoting with $D_\lambda$ the diagonal matrix with diagonal $\lambda\in\mathds{R}^n$, the positivity of $\chi[a,c]$ amounts to require that
\be \mathds{1} \,+\, a\, D_\lambda\, +\, c\, \Ket{\psi}\!\!\Bra{\psi}\, \geq\, 0 \qquad \forall\ \Ket{\psi}\, =\, \sum_{i=1}^n \sqrt{\lambda_i}\,\Ket{i}\ ,\quad \lambda\ \text{probability distribution} \label{pos chi pr eq1} \ee
(phases are irrelevant, as the application of a diagonal unitary immediately reveals). Since $\mathds{1}+a\, D_\lambda>0$, after left-- and right-- multiplying by $(\mathds{1}+aD_\lambda)^{1/2}$ the above inequality reads also
\bq \mathds{1}\, +\, c\, \left(\mathds{1}+a D_\lambda\right)^{-1/2}\Ket{\psi}\!\!\Bra{\psi} \left(\mathds{1}+a D_\lambda\right)^{-1/2}\, \geq\, 0\ , \eq
that in turn becomes
\bq \bra{\psi} \left(\mathds{1}+a D_\lambda\right)^{-1} \ket{\psi}\, +\, c\, \left| \bra{\psi} \left(\mathds{1}+a D_\lambda\right)^{-1} \ket{\psi} \right|^2\ \geq\ 0\ , \eq
eventually leading to
\bq 1\, +\, c\, \bra{\psi} \left(\mathds{1}+a D_\lambda\right)^{-1} \ket{\psi}\ \geq\ 0\ , \eq
i.e.
\bq c\ \geq\ -\, \frac{1}{\bra{\psi} \left(\mathds{1}+a D_\lambda\right)^{-1} \ket{\psi}}\ =\ -\, \frac{1}{\sum_i \frac{\lambda_i}{1+a \lambda_i}}\ . \eq
The strictest condition is for the probability distribution $\lambda$ achieving the maximum of $f(\lambda)\, \equiv\, \sum_i \frac{\lambda_i}{1+a\lambda_i}$. From the strict concavity of $\frac{x}{1+ax}$ on $x\geq 0$ it follows easily that $f(\lambda)$ is strictly concave on the simplex of the allowed vectors $\lambda$. Therefore, any internal stationary point is automatically a global maximum. Applying the method of Lagrange multipliers gives us $\lambda_i\equiv \text{const} = \frac{1}{n}$ as the unique stationary point of $f$, and thus 
\bq f_{\max}\, =\, \frac{1}{1+\frac{a}{n}}\ . \eq
The final condition on $c$ becomes exactly
\bq c\ \geq\ -1-\frac{a}{n}\ ,\eq
as stated in \eqref{pos chi eq} for the present case $a\geq 0$. Now, let us consider the opposite possibility, $a<0$. Since the diagonal entries of the operator in \eqref{pos chi pr eq1} are $1+(a+c)\lambda_i$, imposing their positivity for all the allowed $\lambda$ amounts to require that $a+c+1\geq 0$. Moreover, observe that if $a<-2$ it is possible to force $\mathds{1}+a D_\lambda$ to have a $2$--dimensional negative eigenspace. In this case, an addition of pure state could not make the whole operator positive. Therefore, we have to demand $a\geq -2$. In order to show that these two conditions are sufficient, consider an arbitrary pure state $\ket{v}$ and take the matrix element of \eqref{pos chi pr eq1} on it.
\be \bra{v} \left( \mathds{1} + a D_\lambda + c \ket{\psi}\!\!\bra{\psi} \right) \ket{v}\ =\ 1\, +\, a\, \sum_i \lambda_i |v_i|^2\, +\, c\ \Big| \sum_i \sqrt{\lambda_i} v_i\Big|^2 \label{pos chi pr eq2} \ee
In order to prove that the above quantity is always positive if $a\geq -2$ and $a+c+1\geq 0$, we want to formalize the following intuition. On the one hand, it is not possible to make the coefficient of $c$ small without choosing at least two non--zero (and comparable) $v_i$, thus reducing the negative impact of the coefficient of $a$. On the other hand, if one wants to concentrate the weights $\sqrt{\lambda_i} v_i$ on a single element, then $a$ appears always summed with $c$. Let us make the above reasoning rigorous by considering the following geometric inequality, valid for arbitrary $z_1,\ldots,z_n\in\mathds{C}$:
\be \sum_i |z_i|^2\ \leq\ \frac{1}{2}\,\bigg(\Big(\sum_i |z_i|\Big)^2\, +\, \Big|\sum_i z_i\,\Big|^2\bigg)\ . \label{geom ineq} \ee
Its proof follows easily once one rewrites the difference of its two sides as
\bq \sum_{ij} \left(\, |z_i| |z_j| + \Re (z_i^* z_j) - 2 |z_i|^2\delta_{ij}\, \right)\ \geq\ 0\ , \eq
where the latter inequality holds trivially because the left--hand side is a sum of positive terms. A nice geometrical interpretation arises when one considers a closed polygon, that is, numbers $z_1,\ldots, z_n$ such that $\sum_i z_i = 0$. In this case, the above relation states that the sum of the square sides is at most \emph{half} of the square perimeter.

Applying \ref{geom ineq} with $z_i=\sqrt{\lambda_i} v_i$ gives
\bq \sum_i \lambda_i |v_i|^2\ \leq\ \frac{1+k}{2}\ , \eq
where we noticed that $\big|\sum_i \sqrt{\lambda_i} |v_i|\big|\, \leq\, 1$ and wrote for brevity $k\equiv |\braket{\psi | v}|^2$. Then, taking into account that $a<0$, \eqref{pos chi pr eq2} can be lower bounded by
\bq \bra{v} \left( \mathds{1} + a D_\lambda + c \ket{\psi}\!\!\bra{\psi} \right) \ket{v}\ \geq\ 1\, +\, a\, \frac{1+k}{2}\, +\, c\, k\ =\ (1+a+c) k\, +\, \left(1+ \frac{a}{2}\right) (1-k)\ \geq\ 0 \ . \eq 
\end{proof}

\begin{rem}
It is also possible to give a direct proof of the positivity of $\chi[a,c]$ when $-2\leq a\leq 0,\, 1+a+c\geq 0$, based on Hadamard product arguments. Clearly, up to taking convex combinations with known positive maps, one has to prove only the positivity of the extreme point $\chi[-2,1] = \mathds{1}\Tr - 2 D + I$. Then, the crucial observation consists in writing the above map as a Hadamard square of a completely copositive map, i.e. $\chi[-2,1]=\left(\mathds{1}\Tr -I\right)^{\circ 2}$, where the basis we choose from now on to take Hadamard products is the canonical basis of operators $\{\ket{i}\!\!\bra{j}\}_{i,j}$ (or its Choi--Jamiolkowski equivalent $\{\ket{ij}\}_{i,j}$). The following reasoning then shows that Hadamard multiplying two completely copositive maps yields another completely copositive map. Denoting with $R_\phi\equiv (\phi\otimes I)(\ket{\varepsilon}\!\!\bra{\varepsilon})$ the Choi state of $\phi$, one has in fact
\begin{equation*}
R_{\phi_1}^{T_A},\, R_{\phi_2}^{T_A}\, \geq\, 0\quad\Longrightarrow\quad R_{\phi_1\circ \phi_2}^{T_A}\, =\, \left( R_{\psi_1}\circ R_{\phi_2} \right)^{T_A}\, =\, R_{\phi_1}^{T_A}\circ R_{\phi_2}^{T_A}\, \geq\, 0\, .
\end{equation*}
\end{rem}

Putting together Proposition \ref{red pos prob} and Theorem \ref{pos chi} allows us to conclude that $\Phi [\alpha,\beta,\gamma]$ is a positive map iff $\alpha,\beta\geq -1$ and $a=\alpha+\beta,\, c=\gamma$ satisfy \eqref{pos chi eq}. A graphical representation of the region determined by these conditions is given in Figure \ref{Positivity region}.

\vspace{2ex}
\begin{figure}[ht] 
\centering
\includegraphics[height=10cm, width=10cm, keepaspectratio]{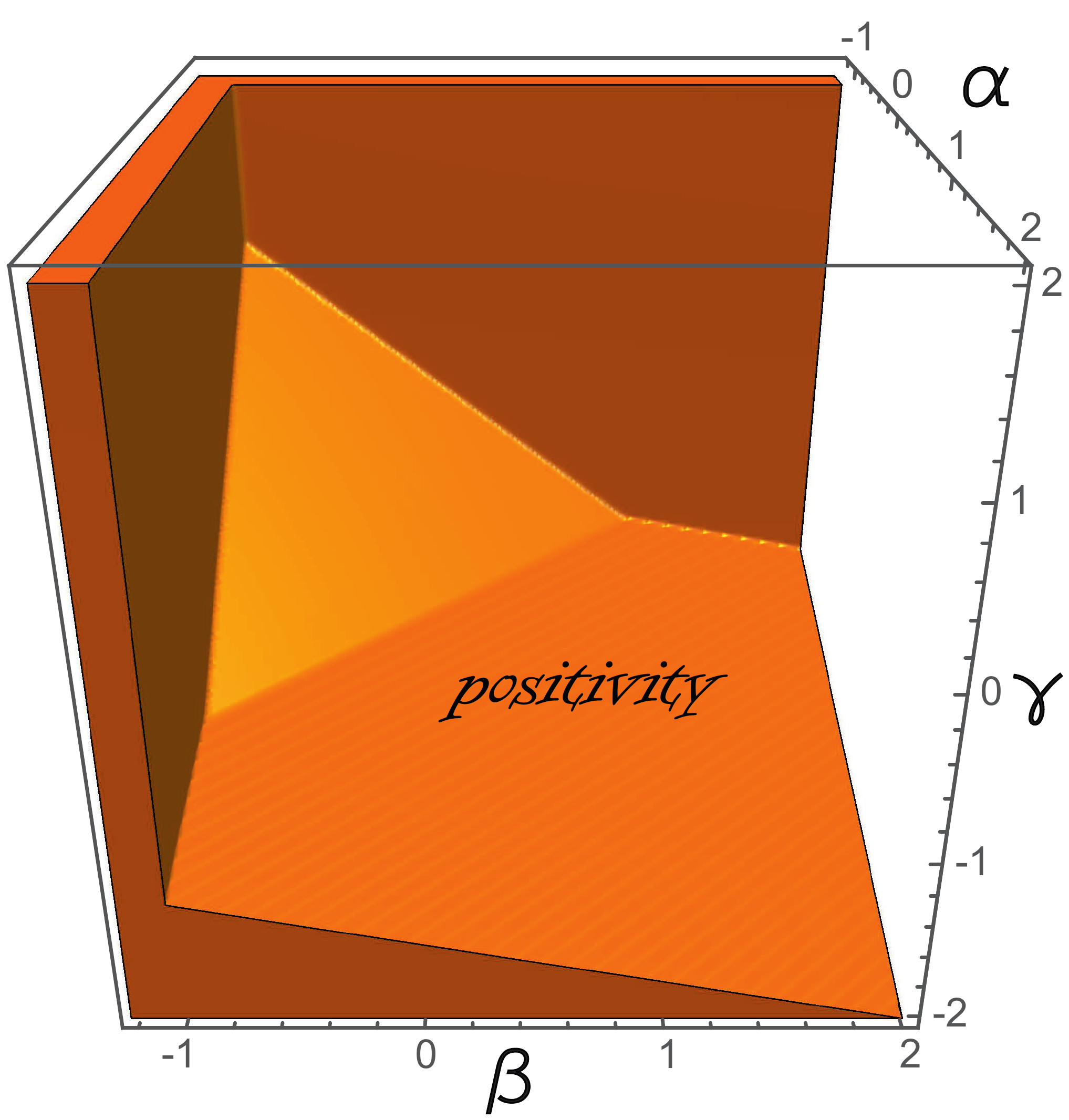}
\caption[]{The solid region represents the parameter region for which the map $\Phi [ \alpha,\beta,\gamma ]$ of \eqref{Phi} is not positive. Here the the $n=4$ case is considered.}
\label{Positivity region}
\end{figure}

However, we want to give a simpler and more direct deduction of the positivity conditions for $\Phi$, as formalized below.

\begin{thm} \label{positivity}
The map $\Phi[\alpha,\beta,\gamma]$ defined by \eqref{Phi} is positive iff:
\be
\alpha,\beta\, \geq -1\qquad \text{and}\qquad \left\{ \begin{array}{lr} \gamma+\frac{\alpha+\beta}{n}+1\ \geq\ 0  & \ \ \text{if}\ \ \alpha+\beta\geq 0\, , \\[2ex] \alpha+\beta+\gamma+1\geq 0 & \ \ \text{if}\ \ \alpha+\beta<0\, . \end{array} \right.
\label{positivity eq} \ee
\end{thm}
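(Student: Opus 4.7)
The plan is to give a self-contained proof that collapses the two-step argument (Proposition \ref{red pos prob} followed by Theorem \ref{pos chi}) into a single direct case analysis, without ever invoking the auxiliary map $\chi$. Exploiting the local-unitary covariance of $\Phi$ I would reduce, as in Proposition \ref{red pos prob}, to pure inputs in Schmidt normal form $\ket{\Psi}=\sum_i\sqrt{\lambda_i}\ket{ii}$ and use the same block decomposition
\bq
\Phi(\ket{\Psi}\!\!\bra{\Psi})\,=\,\sum_{i\neq j}(1+\alpha\lambda_j+\beta\lambda_i)\ket{ij}\!\!\bra{ij}\,+\,\sum_{ij}\!\bigl((1+(\alpha+\beta)\lambda_i)\delta_{ij}+\gamma\sqrt{\lambda_i\lambda_j}\bigr)\ket{ii}\!\!\bra{jj}\ .
\eq
The first (off-diagonal) block is diagonal and is nonnegative for every probability distribution $\lambda$ iff $\alpha,\beta\geq -1$. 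The whole problem therefore reduces to the positivity of the ``coherent'' block $M=D+\gamma\ket{\psi}\!\!\bra{\psi}$ on $\text{span}\{\ket{ii}\}$, where $D=\text{diag}(1+(\alpha+\beta)\lambda_i)$ and $\ket{\psi}=\sum_i\sqrt{\lambda_i}\ket{i}$.

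Writing $a\equiv \alpha+\beta$, I would split on its sign. If $a\geq 0$ then $D>0$, so sandwiching $M$ by $D^{-1/2}$ turns positivity of $M$ into $1+\gamma\bra{\psi}D^{-1}\ket{\psi}\geq 0$. Since $x\mapsto x/(1+ax)$ is strictly concave on $[0,1]$ for $a\geq 0$, the function $\sum_i\lambda_i/(1+a\lambda_i)$ is strictly concave on the probability simplex in $\mathds{R}^n$, so by Lagrange multipliers the unique maximum is the uniform distribution $\lambda_i=1/n$, yielding the tight condition $\gamma+\frac{a}{n}+1\geq 0$.

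If instead $a<0$, the previously established $\alpha,\beta\geq -1$ automatically gives $a\geq -2$, so the obstruction $a<-2$ of Theorem \ref{pos chi} simply cannot occur here; necessity of $1+a+\gamma\geq 0$ is read off the diagonal of $M$ at $\lambda_i=1$. For sufficiency I would mimic the geometric-inequality argument of Theorem \ref{pos chi}: for any unit $\ket{v}$, the inequality $\sum_i|z_i|^2\leq \tfrac{1}{2}\bigl((\sum_i|z_i|)^2+|\sum_i z_i|^2\bigr)$ with $z_i=\sqrt{\lambda_i}v_i$, combined with Cauchy--Schwarz, gives $\sum_i\lambda_i|v_i|^2\leq (1+k)/2$ where $k=|\braket{\psi|v}|^2\in[0,1]$, and then (using $a<0$)
\bq
\bra{v}M\ket{v}\,\geq\, (1+a+\gamma)\,k\,+\,\left(1+\tfrac{a}{2}\right)(1-k)\,\geq\,0\ ,
\eq
both coefficients being nonnegative by the now-available $a\geq -2$ and $1+a+\gamma\geq 0$.

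The main obstacle is genuinely only the geometric inequality needed in the $a<0$ branch, and since this ingredient is already in place from Theorem \ref{pos chi}, the content is not new; what the proof gains is directness, coming from bypassing the map $\chi$ and from the fortunate fact that $\alpha,\beta\geq -1$ \emph{automatically} excludes the troublesome regime $a<-2$, so the two cases of \eqref{positivity eq} match exactly the two cases of \eqref{pos chi eq} with no residual region left to analyse.
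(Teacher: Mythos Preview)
Your proof is correct. It is, in essence, precisely the Proposition~\ref{red pos prob} + Theorem~\ref{pos chi} route that the paper develops first, with the auxiliary map $\chi$ stripped out and the analysis carried out directly on the coherent block of $\Phi(\ket{\Psi}\!\!\bra{\Psi})$. (One cosmetic slip: at $a=0$ the function $x/(1+ax)$ is linear, not strictly concave, but the maximisation is unaffected.) Your observation that $\alpha,\beta\geq -1$ automatically forces $a\geq -2$, making the extra constraint in \eqref{pos chi eq} redundant, is exactly what makes the merged argument clean.

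The paper's designated proof of Theorem~\ref{positivity} (labelled ``Alternative proof''), however, takes a genuinely different path. Necessity is obtained from three concrete test states ($\ket{\varepsilon}$ and $\ket{11}$ with various matrix elements) rather than from the full block analysis. Sufficiency bypasses the block $M$ and the geometric inequality \eqref{geom ineq} entirely: the authors note that every point of the region \eqref{positivity eq} is a convex combination of points on three extremal half-lines, and then exhibit each $\Phi$ on those half-lines explicitly as a sum of tensor products of CP, coCP, or entanglement-breaking single-system maps (e.g.\ $\Phi[-1,-1,\gamma]=(\mathds{1}\Tr-I)\otimes(\mathds{1}\Tr-I)+(\gamma-1)I$). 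This convex-decomposition argument costs nothing analytically and has the bonus that the same extremal decompositions are reused verbatim in the later proofs of Theorems~\ref{test EB} and~\ref{EA}. Your route is more self-contained and needs no structural facts about depolarizing channels, but pays for it with the ad~hoc inequality \eqref{geom ineq}.
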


\begin{proof}[Alternative proof]
First of all, the above conditions are necessary, as can be seen by choosing suitable input states.
\begin{itemize}
\item A maximally entangled state $\ket{\varepsilon}$ gives $\Phi(\ket{\varepsilon}\!\!\bra{\varepsilon}) = \left(1+\frac{\alpha+\beta}{n}\right) \mathds{1} + \gamma \ket{\varepsilon}\!\!\bra{\varepsilon} \geq 0$, from which $1+\frac{\alpha+\beta}{n}+\gamma \geq 0$ follows.
\item A product state $\ket{11}$ gives $\Phi(\ket{11}\!\!\bra{11}) = \mathds{1} + \alpha\, \mathds{1}\otimes\ket{1}\!\!\bra{1} + \beta\, \ket{1}\!\!\bra{1}\otimes \mathds{1} + \gamma \ket{11}\!\!\bra{11} \geq 0$. Taking the matrix element on the same state produces $1+\alpha+\beta+\gamma\geq 0$.
\item As above, input $\ket{11}$ but take the matrix element on pure states $\ket{21}$ and $\ket{12}$, producing the condition $1+\alpha\geq 0$ and $1+\beta\geq 0$.
\end{itemize}
Moreover, observe that if $\alpha+\beta\geq 0$ the second condition is subsumed under the first one (while the others are trivial), while on the contrary if $\alpha+\beta<0$ the first condition is subsumed under the second one. This completes the proof of the necessity of the inequalities \eqref{positivity eq}.

In order to prove their sufficiency, observe the region that they determine in Figure \ref{Positivity region}. It is apparent that every point belonging to that region is a convex combination of three points on the three following half--lines:
\begin{itemize}
\item the vertical one, at the intersection between the $\alpha\gamma$ and $\beta\gamma$ coordinate planes, $\{\, (-1,-1,\gamma)\,:\ \gamma\geq 1\, \}$ ;
\item the lower one running parallel to the $\beta\gamma$ plane, $\{\, (-1,\beta,-1-\frac{\beta-1}{n})\,:\ \beta\geq 1\, \}$ ;
\item the lower one running parallel to the $\alpha\gamma$ plane, $\{\, (\alpha,-1,-1-\frac{\alpha-1}{n})\,:\ \alpha\geq 1\, \}$ .
\end{itemize}
If we prove that all of the above half--lines are entirely composed of positive maps, we are done. Let us proceed in the same order.
\begin{itemize}
\item $\Phi[-1,-1,\gamma] = (\mathds{1}\Tr-I)\otimes(\mathds{1}\Tr-I) + (\gamma-1) I$ is positive if $\gamma\geq 1$, because the first addend is the tensor product of two completely copositive maps! In other words, up to composing with a global transposition (that is positive and invertible), it is nothing but the product of two completely positive maps.
\item $\Phi\big[ -1,\beta,-1-\frac{\beta-1}{n} \big] = (\mathds{1}\Tr+I)\otimes (\mathds{1}\Tr-I) + (\beta-1)\, I\otimes \big(\mathds{1}\Tr-\frac{1}{n} I\big)$ is positive if $\beta\geq 1$, because:
\begin{itemize}
\item the first addend is positive since it is a tensor product of an entanglement--breaking channel on the first subsystem and a positive map on the second subsystem;
\item the second addend is positive because $\mathds{1}_B\Tr -\frac{I}{n}$ is completely positive if $n=d_B$ and at least $n$--positive if $n=d_A$ (equivalently, one can see directly that $I\otimes \big(\mathds{1}\Tr-\frac{1}{n} I\big)$ is positive by testing it on a pure state).
\end{itemize}
\item The third case is completely analogous to the second one and can be treated in a symmetrically identical way.
\end{itemize}
\end{proof}

\section{Complete Positivity} \label{sec CP}

Determining the range of $\alpha,\beta,\gamma$ for which the map $\Phi[\alpha,\beta,\gamma]$ given by \eqref{Phi} is completely positive requires the construction of the Choi state associated to $\Phi$. Calling $A'B'$ the twin system of $AB$, a maximally entangled state reads
\bq \ket{\mathcal{E}}_{ABA'B'}\ =\ \frac{1}{\sqrt{d_A d_B}}\, \sum_{ij}\, \ket{ij}_{AB}\ket{ij}_{A'B'}\ =\ \ket{\varepsilon}_{AB} \ket{\varepsilon}_{A'B'}\ . \eq
As a consequence, the Choi state becomes
\be R_{\Phi}\ \equiv\ (\Phi_{AB}\otimes I_{A'B'})\, \left( \ket{\mathcal{E}}\!\!\bra{\mathcal{E}}_{ABA'B'} \right)\ =\ \frac{\mathds{1}_{ABA'B'}}{d_A d_B}\, +\, \alpha\, \frac{\mathds{1}_{AA'}}{d_A}\otimes \ket{\varepsilon}\!\!\bra{\varepsilon}_{BB'}\, +\, \beta\, \ket{\varepsilon}\!\!\bra{\varepsilon}_{AA'}\otimes\frac{\mathds{1}_{BB'}}{d_B}\, +\, \gamma\, \ket{\varepsilon}\!\!\bra{\varepsilon}_{AA'}\otimes \ket{\varepsilon}\!\!\bra{\varepsilon}_{BB'} . \label{Choi state Phi} \ee
Since the four addends appearing in the above equation commute, the diagonalization of their sum is straightforward. It is easily seen that the distinct eigenvalues of the operator in \eqref{Choi state Phi} are
\bq \frac{1}{d_A d_B} , \quad \frac{1}{d_A d_B} + \frac{\alpha}{d_A}, \quad \frac{1}{d_A d_B} + \frac{\beta}{d_B}, \quad \frac{1}{d_A d_B} + \frac{\alpha}{d_A} + \frac{\beta}{d_B} + \gamma\ . \eq
Thus, the following theorem is proven.

\begin{thm} \label{complete positivity}
The map $\Phi[\alpha,\beta,\gamma]$ defined by \eqref{Phi} is completely positive iff
\be \alpha\, \geq\, -\, \frac{1}{d_B} \ ,\quad \beta\, \geq\, -\, \frac{1}{d_A} \quad\text{and}\quad 1+d_B\, \alpha + d_A\, \beta + d_A d_B\, \gamma\, \geq\, 0\ .   \label{complete positivity eq} \ee
\end{thm}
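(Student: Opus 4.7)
The plan is to invoke the Choi--Jamiolkowski isomorphism: $\Phi[\alpha,\beta,\gamma]$ is completely positive if and only if its Choi operator $R_\Phi$ is positive semidefinite. The form of $R_\Phi$ has already been computed in \eqref{Choi state Phi} as a sum of four operators, so the task reduces to diagonalising this sum and imposing non--negativity on every eigenvalue.

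The crucial observation is that the four summands commute pairwise. Indeed, $\mathds{1}_{AA'}/d_A$ and $\mathds{1}_{BB'}/d_B$ commute with everything, and $\ket{\varepsilon}\!\!\bra{\varepsilon}_{AA'}$ and $\ket{\varepsilon}\!\!\bra{\varepsilon}_{BB'}$ act on disjoint tensor factors. Consequently the orthogonal projectors $P^\varepsilon_{AA'} \equiv \ket{\varepsilon}\!\!\bra{\varepsilon}_{AA'}$, $P^\perp_{AA'}\equiv \mathds{1}_{AA'}-P^\varepsilon_{AA'}$, and their $BB'$ analogues provide a simultaneous spectral decomposition of $R_\Phi$ into four mutually orthogonal sectors corresponding to the four combinations $\{P^\perp, P^\varepsilon\}_{AA'}\otimes \{P^\perp, P^\varepsilon\}_{BB'}$. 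On each sector $R_\Phi$ acts as a scalar, and a direct computation yields precisely the four eigenvalues already listed in the excerpt:
\bq \tfrac{1}{d_A d_B},\quad \tfrac{1}{d_A d_B}+\tfrac{\alpha}{d_A},\quad \tfrac{1}{d_A d_B}+\tfrac{\beta}{d_B},\quad \tfrac{1}{d_A d_B}+\tfrac{\alpha}{d_A}+\tfrac{\beta}{d_B}+\gamma\, . \eq

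The conclusion is then immediate: the first eigenvalue is automatically positive, while demanding non--negativity of the remaining three, after multiplying through by $d_A d_B$, reproduces exactly $\alpha\geq -1/d_B$, $\beta\geq -1/d_A$, and $1+d_B\alpha+d_A\beta+d_A d_B \gamma\geq 0$. There is no real obstacle in this proof: the commutativity of the four summands of $R_\Phi$ trivialises what would otherwise be a genuine diagonalisation problem, and the reason this works so smoothly is precisely the local--unitary symmetry of the class \eqref{Phi} emphasised in the introduction, which forces the Choi state to lie in the abelian algebra generated by the two entangled projectors and the identity.
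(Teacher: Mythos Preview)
Your proof is correct and follows essentially the same route as the paper: compute the Choi state \eqref{Choi state Phi}, observe that its four summands commute, read off the four eigenvalues, and impose their non--negativity. The only difference is that you spell out explicitly the decomposition into the four sectors $\{P^\perp,P^\varepsilon\}_{AA'}\otimes\{P^\perp,P^\varepsilon\}_{BB'}$, which the paper leaves implicit.
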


As can be easily verified, conditions \eqref{complete positivity eq} imply that every completely positive map in the $\Phi$ class can be written as a convex combination of three points lying on the three half--lines coming out from the vertex $\big(-\frac{1}{d_B},\, -\frac{1}{d_A},\, \frac{1}{d_A d_B} \big)$. They can be represented as follows:
\begin{itemize}
\item the vertical half--line is $\Big\{ \big(-\frac{1}{d_B},\, -\frac{1}{d_A},\, \gamma \big)\, :\ \gamma\geq \frac{1}{d_A d_B} \Big\}$ ;
\item one of the other two is $\Big\{ \big(-\frac{1}{d_B},\, \beta,\, -\frac{\beta}{d_B} \big)\, :\ \beta\geq - \frac{1}{d_A} \Big\}$ ,
\item while the last one is symmetrically described as $\Big\{ \big(\alpha,\, -\frac{1}{d_A},\, -\frac{\alpha}{d_A} \big)\, :\ \alpha\geq - \frac{1}{d_B} \Big\}$ . \end{itemize}

\section{Entanglement--Breaking} \label{sec EB}

The goal of the present section is to answer the question of what is the region in the $\alpha,\beta,\gamma$ parameter space that defines an entanglement--breaking map through equation \eqref{Phi}. Obviously, such a region must be contained in the complete positivity solid defined via \eqref{complete positivity eq}. Reformulating the problem with the help of the Choi--Jamiolkowski isomorphism, we want to determine necessary and sufficient conditions for the separability of the state \eqref{Choi state Phi}. This problem has already been solved in the special case $d_A=d_B$ in \cite{ChrusKoss}, but we will see that the most interesting phenomena appear when one considers the asymmetric case $d_A\neq d_B$.

We will prove the main theorem in two different ways. The first approach (perhaps more elegant, though less direct) employs a provably sufficient set of entanglement witnesses to decide the separability of a state. The second one, instead, uses some positive maps to find necessary conditions for separability, that are then found to be also sufficient by direct construction.

An important observation (that we explain in the box below) exploits the symmetries of our problem to connect this separability question to the problem of testing witnesses \emph{belonging to the same symmetric class}, i.e. the positive maps given by Theorem \ref{positivity}. In what follows, the symbol $\mathbf{CP}$ will stand for the set of completely positive maps (the system they are acting on being understood), while $T$ will denote the transposition map, as usual.

\begin{thm} \label{test EB}
If $d_A=d_B$ then the map $\Phi[\alpha,\beta,\gamma]$ defined by \eqref{Phi} is entanglement--breaking iff it is completely positive and PPT. If $d_A<d_B$, it is entanglement--breaking iff
\be T\Phi\, \in\, \mathbf{CP} \quad\text{and}\quad \bigg( I_A\otimes\Big( \mathds{1} \text{\emph{Tr}}\, -\frac{I}{d_A}\Big)_B \bigg)\ \Phi\, \in\, \mathbf{CP} \label{test EB} \ee
If $d_A>d_B$ a reversed but analogous condition holds (just exchange subscripts $A$ and $B$).
\end{thm}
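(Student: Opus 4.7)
The plan is to prove necessity first (an easy consequence of the Horodecki characterization of EB maps) and then to handle sufficiency via a symmetry reduction to a finite family of entanglement witnesses.

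For necessity I will use the standard fact that if $\Phi$ is entanglement--breaking then $\Lambda\circ\Phi\in\mathbf{CP}$ for every positive map $\Lambda$, since $R_{\Lambda\circ\Phi}=(\Lambda\otimes I_{A'B'})(R_\Phi)$ must remain positive semidefinite when $R_\Phi$ is separable. Applying this with $\Lambda=T$ and with $\Lambda=I_A\otimes(\mathds{1}\Tr -I/d_A)_B$ yields the two stated conditions, provided the second $\Lambda$ is positive; it is, because its second factor sends $\ket{\psi}\!\bra{\psi}\mapsto\mathds{1}_B-\ket{\psi}\!\bra{\psi}/d_A\geq 0$.

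For sufficiency I will exploit the symmetries of $R_\Phi$ to reduce the set of relevant witnesses. Since $R_\Phi$ is invariant under the local--unitary twirl by $G=\{U_AV_B\otimes U_A^*V_B^*\}$ -- an action that stays local across the $AB|A'B'$ cut -- any Hermitian $W$ that is $\geq 0$ on product states can be replaced by its $G$--average $\bar W$ while preserving both product-positivity and $\Tr[WR_\Phi]$. As computed earlier in the introduction, the $G$--invariant subspace is $4$--dimensional and, via Choi--Jamiolkowski, coincides with the Choi operators of $\Phi$--class maps $\Psi=\Phi[\alpha',\beta',\gamma']$, the product-positivity of $\bar W$ becoming the positivity of $\Psi$ as a map (Theorem~\ref{positivity}). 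Thus it suffices to verify $\Tr[R_\Psi R_\Phi]\geq 0$ for every positive $\Psi$ in the $\Phi$--class, and by convexity only for the extreme rays of this cone.

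Extending the discussion in the alternative proof of Theorem~\ref{positivity}, the positive cone has six extreme rays: three generated by the vertices $\Phi[-1,-1,1]$, $\Phi[-1,1,-1]$, $\Phi[1,-1,-1]$ of the normalized positivity region, plus three ``at infinity'' generated by $I$, $(\mathds{1}\Tr -I/n)_A\otimes I$, and $I\otimes(\mathds{1}\Tr -I/n)_B$. I will diagonalize the four commuting basis operators on the joint eigenbasis built from $\ket{\varepsilon}\!\bra{\varepsilon}_{AA'}$, $\ket{\varepsilon}\!\bra{\varepsilon}_{BB'}$ and their orthogonal complements, and compute each $\Tr[R_\Psi R_\Phi]$ directly. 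A short calculation should show that the three vertex tests coincide, up to positive prefactors, with three of the four PPT inequalities $1\pm\alpha\pm\beta\pm\gamma\geq 0$ coming from the partial-transpose eigenspaces of $R_\Phi$; the fourth ($1+\alpha+\beta+\gamma\geq 0$) is automatic from $\Phi\in\mathbf{CP}$ (comparison with the CP bound $1+\alpha d_B+\beta d_A+\gamma d_A d_B\geq 0$).

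For the three ``at infinity'' witnesses I will use the self-adjoint identity $\Tr[R_X R_\Phi]=\bra{\mathcal{E}}R_{X\circ\Phi}\ket{\mathcal{E}}$ to convert each scalar test into a CP check on $X\circ\Phi$. The witness $I$ tests only the CPness of $\Phi$, which we have. When $d_A\leq d_B$ the map $(\mathds{1}\Tr -I/d_A)_A$ is itself CP (its Choi state $(\mathds{1}-\ket{\varepsilon}\!\bra{\varepsilon})/d_A$ is PSD), so the $A$--side reduction test is automatic; only the $B$--side witness $I_A\otimes(\mathds{1}\Tr -I/d_A)_B$ yields the nontrivial second condition of the theorem. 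In the special case $d_A=d_B$ both reduction witnesses are CP, the second condition is vacuous, and PPT alone survives; the $d_A>d_B$ case follows by $A\leftrightarrow B$ symmetry. The main obstacle I anticipate is the bookkeeping involved in the extreme--ray enumeration and in matching each vertex test to the correct PPT eigenspace; once that correspondence is set up, the convexity argument closes the proof.
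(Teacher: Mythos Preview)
Your approach is essentially the paper's: reduce via the $U\otimes V\otimes U^*\otimes V^*$ twirl to witnesses in the $\Phi$--class, pass to the extreme rays of that positive cone (three finite vertices plus three directions at infinity), and check which of the six tests are absorbed by PPT. The paper phrases the tests as $\Phi'\Phi\in\mathbf{CP}$ rather than the scalar $\Tr[R_{\Phi'}R_\Phi]\geq 0$ and argues via decomposability of each witness, but the structure and the six witnesses are the same.

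There is, however, a genuine gap where you write ``the witness $I$ tests only the CPness of $\Phi$, which we have'' and declare the $A$--side reduction test ``automatic'' because that witness is CP. Both of these inferences silently use $\Phi\in\mathbf{CP}$, which is \emph{not} among the stated hypotheses for $d_A<d_B$. In fact the statement, read literally, is false: with $d_A<d_B$ take $\Phi[-1,0,0]$. Then $T\Phi\in\mathbf{CP}$ (the four partial--transpose eigenvalues are $0,0,2,2$), and a short computation gives
\[
\big(I_A\otimes(\mathds{1}\Tr-I/d_A)_B\big)\,\Phi[-1,0,0]\;=\;\Big(d_B-1-\tfrac{1}{d_A}\Big)\,\mathds{1}\Tr\;+\;\tfrac{1}{d_A}\,\mathds{1}_A\Tr\otimes I_B\,,
\]
which is manifestly CP; yet $\alpha=-1<-1/d_B$, so $\Phi$ is not CP and hence not EB. The paper has the same slip: its own proof of Theorem~\ref{EB} quietly reinstates $\Phi\in\mathbf{CP}$ as a third condition (``we have just to impose the complete positivity of $\Phi$, $T\Phi$ and \ldots''). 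So the correct statement you should be proving is EB $\Longleftrightarrow$ $\Phi\in\mathbf{CP}$, $T\Phi\in\mathbf{CP}$, and the reduction--map condition; once CP is added, your extreme--ray argument closes exactly as you outline.
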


\begin{proof}
\begin{example}{Separability and Witnesses}{witness}
States which are invariant under a local group action are known to exhibit simplified entanglement properties, whose elementary proofs can be found in Appendix \ref{app sep symm}. Here, we just briefly review some of the most important ones.  Consider two representations $\varphi_1:\mathcal{G}\rightarrow \mathcal{L}\big(\mathcal{H}(n;\mathds{C})\big)$ and $\varphi_2:\mathcal{G}\rightarrow \mathcal{L}\big(\mathcal{H}(m;\mathds{C})\big)$ of a compact group $\mathcal{G}$ on the local spaces of hermitian matrices. Then, the associated projection is
\begin{equation*} \mathcal{P}_\mathcal{G}\ =\ \int_\mathcal{G} dg\ \varphi_1(g)\otimes \varphi_2(g)\, , \end{equation*}
where $\int_\mathcal{G} dg$ is the Haar integral. When acting on a bipartite operator, this projection outputs always a $(\mathcal{G}\otimes \mathcal{G})$--invariant state. Moreover, positivity and separability are preserved under the action of the above superoperator. We remind the reader that in general the Woronowicz criterion states that $\rho$ is separable iff $\Tr\rho W\geq 0$ for all the operators $W$ such that $\Tr \sigma W\geq 0$ for any separable $\sigma$ (such a $W$ is called a separability--witness). When there is a local group action, thanks ultimately to the separability--preserving properties of the group projection, it is possible to give a relaxed version of the above criterion. Namely, a $(\mathcal{G}\otimes \mathcal{G})$--invariant state $\rho$ is separable iff $\Tr \rho W\geq 0$ for all $(\mathcal{G}\otimes \mathcal{G})$--invariant separability witnesses $W$.
\end{example}

As is well known, $\Phi$ is entanglement--breaking iff its Choi state $R_\Phi$ (see \eqref{Choi state Phi}) is separable. The symmetrized version of the Woronowicz criterion we just discussed states that this happens iff $\Tr W R_\Phi\geq 0$ for all the separability--witnesses $W$ that share the same symmetry. In other words, $W$ can be assumed to be the Choi matrix of a positive map $\Phi'\equiv \Phi[\alpha',\beta',\gamma']$ belonging to the set defined by \eqref{Positivity region} (or a limit point of the form $\lim_{M\rightarrow\infty} \frac{1}{M}\,\Phi[M\alpha',M\beta',M\gamma']$, with $(M\alpha',M\beta',M\gamma')$ defining a positive map for all $M>0$, as explained in Section \ref{sec intro}; this case will not be considered further because it does not introduce any new constraint). It is understood that all the operators here act on $ABA'B'$, and that the separability problem is with respect to the $AB|A'B'$ cut.

Instead of requiring the scalar condition $\Tr R_{\Phi'} R_\Phi\geq 0$ for all positive $\Phi'$, one can write
\bq \Tr R_{\Phi'} R_\Phi\, =\, \braket{\mathcal{E} |\, R_{(\Phi')^T\Phi}\, | \mathcal{E}}\, =\, \braket{\mathcal{E} |\, R_{\Phi'\Phi}\, | \mathcal{E}} \eq
and move directly to $\Phi'\Phi\in\mathbf{CP}$ for all positive $\Phi'$. In fact, on the one hand the latter is at least as powerful as the former, while on the other hand the former implies that $\Phi$ is entanglement--breaking, that in turn implies $\Phi'\Phi\in\mathbf{CP}$ for all positive $\Phi'$.

From now on, we assume without loss of generality $d_A\leq d_B$. As detailed in the proof of Theorem \ref{positivity}, all the positive maps in the region defined by \ref{Positivity region} can be written as a convex combination of:
\begin{enumerate}
\item $\Phi[-1,-1,\gamma] = (\mathds{1}\Tr-I)\otimes(\mathds{1}\Tr-I) + (\gamma-1) I$ for some $\gamma\geq 1$;
\item $\Phi\big[ -1,\beta,-1-\frac{\beta-1}{d_A} \big] = (\mathds{1}\Tr+I)\otimes (\mathds{1}\Tr-I) + (\beta-1)\, I\otimes \big(\mathds{1}\Tr-\frac{1}{d_A} I\big)$ for some $\beta\geq 1$;
\item $\Phi\big[ \alpha,-1,-1-\frac{\alpha-1}{d_B} \big] = (\mathds{1}\Tr-I)\otimes (\mathds{1}\Tr+I) + (\alpha-1)\, \big(\mathds{1}\Tr-\frac{1}{d_B} I\big) \otimes I$ for some $\alpha\geq 1$.
\end{enumerate}
Consequently, it suffices to test the above three families of witnesses $\Phi'$. For each family with parameter $x$ ($x=\gamma,\beta,\alpha$ in order), up to taking first the case $x=1$ and then the limit $x\rightarrow\infty$, this is the same as testing the two addends separately. Now, we want to understand which one of the above six tests (two for each one of the three families) can be subsumed under the PPT condition $T\Phi\in\mathbf{CP}$. This happens when the witness is \emph{decomposable}, i.e. it is a positive combination of a CP and a coCP map.
\begin{itemize}
\item[1a.] $(\mathds{1}\Tr-I)\otimes(\mathds{1}\Tr-I)$ is completely copositive because it is the tensor product of two completely copositive maps.
\item[1b.] $I$ is obviously completely positive.
\item[2a.] $(\mathds{1}\Tr+I)\otimes (\mathds{1}\Tr-I)$ is completely copositive because it is the tensor product of two completely copositive maps.
\item[2b.] $I\otimes \big(\mathds{1}\Tr-\frac{1}{d_A} I\big)$ is only positive but not completely positive unless $d_A= d_B$; it will be clear later that it is actually indecomposable whenever $d_A\neq d_B$; for the time being, all we can conclude is that we can not a priori discard this test.
\item[3a.] The same as 2a.
\item[3b.] $\big(\mathds{1}\Tr-\frac{1}{d_B} I\big) \otimes I$ is not the same as 2b, because the condition $d_A\leq d_B$ ensures that $\mathds{1}\Tr-\frac{1}{d_B} I$, and thus the entire map, are completely positive.
\end{itemize}

From the above discussion it should be clear that if $d_A=d_B$ then the PPT test is both necessary and sufficient to ensure that $\Phi$ is entanglement--breaking, while if $d_A<d_B$ the only condition that can not be absorbed in the PPT test is the one in 2b. This is the same as saying that for $d_A<d_B$ a map $\Phi$ of the class in \eqref{Phi} is entanglement--breaking iff:
\begin{itemize}
\item $T\Phi$ is completely positive;
\item and $\Big( I_A\otimes\big( \mathds{1} \Tr\, -\frac{I}{d_A}\big)_B \Big)\ \Phi$ is completely positive.
\end{itemize}

\end{proof}

Now that necessary and sufficient conditions for $\Phi$ to be entanglement--breaking have been written down in the form \eqref{test EB}, it is only a matter of finding out the shape of the corresponding solid. An interesting question, as usual, is whether we really need the second test or on the contrary the PPT condition is actually sufficient. We already saw that the suspected answer is that if $d_A<d_B$ we \emph{do} need the second test. In other words, in that case there are PPT entangled states of the form \eqref{Choi state Phi}.

\begin{thm} \label{EB}
Assume that $d_A\leq d_B$. The map $\Phi[\alpha,\beta,\gamma]$ defined by \eqref{Phi} is entanglement--breaking iff the following conditions are met:
\be
\left\{ \begin{array}{l} \alpha\geq -\frac{1}{d_B}\ ,\quad 1+d_B \alpha+ d_A\beta +d_A d_B \gamma\geq 0\ ,\\[1.5ex]
1-\alpha+\beta-\gamma\geq 0\ ,\quad 1+\alpha-\beta-\gamma\geq 0\ ,\quad 1-\alpha-\beta+\gamma\geq 0\ ,\\[1.5ex]
(d_A d_B -1)(d_A\beta+1)\, -\, (d_B-d_A)(\alpha +d_A\gamma) \, \geq\, 0\ .\end{array} \right.
\label{EB eq} \ee
The last inequality can be omitted if $d_A=d_B$. The solid described by the above system is a double pyramid with triangular basis (see Figure \ref{EB region}). The basis has vertices
\bq \left( -\frac{1}{d_B},\, -\frac{1}{d_B},\, 1\right)\, , \qquad \left( 1,\, -\frac{1}{d_A},\, -\frac{1}{d_A}\right)\, , \qquad \left( -\frac{1}{d_B},\, 1,\, -\frac{1}{d_B}\right)\, , \eq
while the culminating vertices of the two pyramids are
\bq \left( -\frac{1}{d_B},\, -\frac{1}{d_A},\, \frac{1}{d_A d_B}\right)\, , \qquad \left( 1,\, 1,\, 1\right)\, . \eq
\end{thm}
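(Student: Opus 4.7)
The plan follows the structure of Theorem \ref{test EB}: under the standing assumption $d_A\leq d_B$, I would prove the necessity of the six inequalities by working out the two CP tests stated there, and prove sufficiency by exhibiting an explicit separable decomposition of $R_\Phi$ at each vertex of the claimed polytope.

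For necessity, I first compute the partial transpose of \eqref{Choi state Phi} on $A'B'$: the substitutions $\ket{\varepsilon}\!\!\bra{\varepsilon}_{AA'}\mapsto F_{AA'}/d_A$ and $\ket{\varepsilon}\!\!\bra{\varepsilon}_{BB'}\mapsto F_{BB'}/d_B$ (with $F$ the swap operator) produce an operator diagonal in the joint $\pm 1$ eigenspaces of the two swaps. The four resulting eigenvalues $(1\pm\alpha\pm\beta\pm\gamma)/(d_Ad_B)$, with matched sign patterns, yield inequalities (3)--(5) together with $1+\alpha+\beta+\gamma\geq 0$; the latter is subsumed by (1), (2) and the implicit $\beta\geq -1/d_A$ (the minimum of $1+\alpha+\beta+\gamma$ subject to those constraints is $(d_A-1)(d_B-1)/(d_Ad_B)\geq 0$). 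A direct calculation then shows that $\bigl(I_A\otimes(\mathds{1}\Tr-I/d_A)_B\bigr)\circ\Phi[\alpha,\beta,\gamma]$ is again a positive multiple of a map of the form \eqref{Phi}, with explicit new parameters $(\alpha',\beta',\gamma')$; applying Theorem \ref{complete positivity} to this composition, the only new nontrivial requirement $1+d_B\alpha'+d_A\beta'+d_Ad_B\gamma'\geq 0$ simplifies, after substitution, to inequality (6). Together with the CP inequalities (1) and (2), this yields all of \eqref{EB eq}.

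For sufficiency, one first verifies by inspection that the region cut out by \eqref{EB eq} is precisely the convex hull of the five listed vertices (each saturates three of the inequalities). Since separability of $R_\Phi$ is convex in $(\alpha,\beta,\gamma)$, it suffices to find a separable decomposition at each vertex. For four of the five, the Choi state factors across the $AA'|BB'$ cut into a tensor product of two isotropic operators, each saturating one of the two endpoints of the isotropic-separability interval:
\begin{align*}
R_{(1,1,1)} &= \Bigl(\tfrac{\mathds{1}_{AA'}}{d_A}+\ket{\varepsilon}\!\!\bra{\varepsilon}_{AA'}\Bigr)\otimes\Bigl(\tfrac{\mathds{1}_{BB'}}{d_B}+\ket{\varepsilon}\!\!\bra{\varepsilon}_{BB'}\Bigr),\\
R_{(-1/d_B,-1/d_A,1/(d_Ad_B))} &= \tfrac{1}{d_Ad_B}\bigl(\mathds{1}_{AA'}-\ket{\varepsilon}\!\!\bra{\varepsilon}_{AA'}\bigr)\otimes\bigl(\mathds{1}_{BB'}-\ket{\varepsilon}\!\!\bra{\varepsilon}_{BB'}\bigr),\\
R_{(1,-1/d_A,-1/d_A)} &= \tfrac{1}{d_A}\bigl(\mathds{1}_{AA'}-\ket{\varepsilon}\!\!\bra{\varepsilon}_{AA'}\bigr)\otimes\Bigl(\tfrac{\mathds{1}_{BB'}}{d_B}+\ket{\varepsilon}\!\!\bra{\varepsilon}_{BB'}\Bigr),\\
R_{(-1/d_B,1,-1/d_B)} &= \tfrac{1}{d_B}\Bigl(\tfrac{\mathds{1}_{AA'}}{d_A}+\ket{\varepsilon}\!\!\bra{\varepsilon}_{AA'}\Bigr)\otimes\bigl(\mathds{1}_{BB'}-\ket{\varepsilon}\!\!\bra{\varepsilon}_{BB'}\bigr).
\end{align*}
Each factor is a positive multiple of an isotropic state at the separability boundary, hence separable across its local cut; tensoring then yields a state separable across the full $AB|A'B'$ cut.

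The main obstacle lies at the fifth vertex $(-1/d_B,-1/d_B,1)$, whose Choi state does not factor as a tensor product of isotropic extremes when $d_A<d_B$: a small linear system shows that no non-negative combination of the four operators above reproduces it. My plan is to identify $R_\Phi$ at this vertex as a positive multiple of the twirl, under the group $\{U\otimes V\otimes U^*\otimes V^*\}$, of the pure state $\ket{\varepsilon}\!\!\bra{\varepsilon}_{AB}\otimes\ket{\varepsilon}\!\!\bra{\varepsilon}_{A'B'}$, where both maximally entangled states are supported on the $d_A\times d_A$ diagonal subspace. This pure state is a product across $AB|A'B'$, twirling preserves separability, and the identification with $R_\Phi$ follows by matching the four coefficients in the invariant basis $\{\mathds{1},\,\mathds{1}_{AA'}\!\otimes\!\ket{\varepsilon}\!\!\bra{\varepsilon}_{BB'},\,\ket{\varepsilon}\!\!\bra{\varepsilon}_{AA'}\!\otimes\!\mathds{1}_{BB'},\,\ket{\varepsilon}\!\!\bra{\varepsilon}_{AA'}\!\otimes\!\ket{\varepsilon}\!\!\bra{\varepsilon}_{BB'}\}$, the common scale turning out to be $d_A(d_B^2-1)/d_B$. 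Recognizing this ``hidden'' extreme ray of the invariant separable cone, beyond the four product-isotropic ones, is the crux of the argument.
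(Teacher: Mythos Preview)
Your proposal is correct and essentially reproduces the paper's argument: the necessity half matches the paper's first proof (imposing $\Phi\in\mathbf{CP}$, $T\Phi\in\mathbf{CP}$, and $\bigl(I_A\otimes(\mathds{1}\Tr-I/d_A)_B\bigr)\Phi\in\mathbf{CP}$ via Theorem~\ref{test EB} and Theorem~\ref{complete positivity}, then pruning redundant inequalities), while the sufficiency half matches the paper's alternative proof, down to the twirl of $\ket{\tilde\varepsilon}\!\bra{\tilde\varepsilon}_{AB}\otimes\ket{\tilde\varepsilon}\!\bra{\tilde\varepsilon}_{A'B'}$ at the awkward vertex $(-1/d_B,-1/d_B,1)$ with the scale $d_A(d_B^2-1)/d_B$. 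The only cosmetic difference is that you present the four ``easy'' vertices as factorizations of the Choi state into separable isotropic tensor factors, whereas the paper writes the corresponding maps as tensor products of entanglement--breaking depolarizing maps; these are the same statement through the Choi--Jamiolkowski isomorphism.
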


\begin{proof} $ \\ $
Thanks to Theorem \ref{test EB}, we have just to impose the complete positivity of $\Phi$, $T\Phi$ and $\big(I\otimes\left(\mathds{1}\Tr-I/d_A \right) \big)\, \Phi$.
\begin{itemize}
\item $\Phi\in\mathbf{CP}$. This gives the two conditions on the first line together with the requirement $\beta\geq -1/d_A$. However, the latter can be neglected because it follows from the other inequalities of the system \eqref{EB eq}. Indeed, multiplying the second inequality of the first line by $(d_B-d_A)/d_B$ and summing the third line produces exactly $d_A\beta+1\geq 0$.
\item $T\Phi\in\mathbf{CP}$. Taking the partial transposition $T_{A'B'}$ of the Choi state \eqref{Choi state Phi} gives
\bq d_A d_B\, (\Phi_{AB}\otimes T_{A'B'})\, \left( \ket{\mathcal{E}}\!\!\bra{\mathcal{E}}_{ABA'B'} \right)\ =\ \mathds{1}_{ABA'B'}\, +\, \alpha\,\mathds{1}_{AA'}\otimes S_{BB'}\, +\, \beta\, S_{AA'}\otimes \mathds{1}_{BB'}\, +\, \gamma\, S_{AA'}\otimes S_{BB'} ,  \eq
where $S$ indicates the swap operator between two subsystems. Since the four addends in the above equation commute, finding the eigenvalues of their sum is straightforward: they are given by $1+\alpha +\beta +\gamma,\ 1+\alpha -\beta -\gamma,\ 1 -\alpha +\beta -\gamma,\ 1-\alpha -\beta +\gamma$. As is easy to see, $1+\alpha+\beta+\gamma\geq 0$ is already implied by the complete positivity conditions; in fact, using the second inequality of the first line to lower bound $\gamma$ gives
\bq 1+\alpha+\beta+\gamma\, \geq\, 1-\frac{1}{d_A d_B} +\left(1-\frac{1}{d_A}\right)\alpha+\left(1-\frac{1}{d_B}\right)\beta\, \geq\, \left(1-\frac{1}{d_A}\right) \left(1-\frac{1}{d_B}\right)\, \geq\, 0\, , \eq
where we used also $\alpha\geq -1/d_B$ and $\beta\geq -1/d_A$ in the second step. We completed also the second line of \eqref{EB eq}.
\item $\big(I\otimes\left(\mathds{1}\Tr-I/d_A \right) \big)\, \Phi\in\mathbf{CP}$. Imposing this condition requires just mechanical calculations, because the composed map under examination belongs to the same parametric class defined by \eqref{Phi}, and therefore Theorem \ref{complete positivity} applies. Besides the third line of \eqref{EB eq}, we obtain two additional inequalities:
\bq \alpha\, \leq\, \frac{d_A d_B -1}{d_B - d_A}\ ,\qquad \alpha+(d_A d_B -1)\beta + d_A \gamma +d_B-\frac{1}{d_A}\, \geq\, 0\, . \eq
The first one is clearly redundant, because the upper bound $\frac{d_A d_B -1}{d_B - d_A}$ is greater than 1, and the second line of \eqref{EB eq} already ensures $\alpha\leq 1$. As a matter of fact, the second inequality of the above two is also useless, because noting that $\alpha+d_A \gamma\geq -(1+d_A\beta)/d_B$ (thanks to the second inequality of the first line of \eqref{EB eq}) gives us
\bq \alpha+(d_A d_B -1)\beta + d_A \gamma +d_B-\frac{1}{d_A}\, \geq\, \left(d_A d_B - 1 -\frac{d_A}{d_B}\right)\left(\beta+\frac{1}{d_A}\right)\, \geq\, 0\, , \eq
where we used also $\beta\geq -1/d_A$.
\end{itemize}

The shape of the solid defined by \eqref{EB eq} can be seen in Figure \ref{EB region}. Finding the vertices is now an elementary exercise. 
\end{proof}

\begin{proof}[Alternative proof] $ \\ $
It is also possible to give a more direct proof of Theorem \ref{EB}, consisting in finding explicitly separable expressions for the vertices of the solid defined by \eqref{EB eq} and represented in Figure \ref{EB region}. Before entering into the details, we remind the reader that for a bipartite system $AA'$ (with $d_A=d_{A'}$) the isotropic separability--preserving projection acts as
\be \mathcal{P}_{AA'} (\cdot)\ \equiv\ \int dU\ U_A\otimes U^*_{A'}\ (\cdot)\ U_A^\dag\otimes U_{A'}^T\ =\ \ket{\varepsilon}\!\!\bra{\varepsilon} (\cdot) \ket{\varepsilon}\!\!\bra{\varepsilon}\, +\, \frac{\mathds{1}-\ket{\varepsilon}\!\!\bra{\varepsilon}}{d_A^2-1}\ \Tr [(\mathds{1}-\ket{\varepsilon}\!\!\bra{\varepsilon})\,(\cdot)]\, , \label{iso proj} \ee
where $\ket{\varepsilon}$ is a maximally entangled state across $AA'$. Another fact coming from the theory of isotropic states is that
\bq \mathds{1}\Tr +\,I\ ,\qquad \mathds{1}\Tr -\, \frac{I}{d} \eq 
are entanglement--breaking maps when acting on a $d$--dimensional system.

Now we are ready to prove that each of the vertices of the double pyramid of Figure \ref{EB region} corresponds to an entanglement--breaking map (or, equivalently, to a separable Choi operator \eqref{Choi state Phi}). Only one of the five vertices requires a special treatment, while the remaining four are easily seen to correspond to products of entanglement--breaking maps on $A$ and $B$ and therefore to entanglement--breaking maps on $AB$.

\begin{itemize}

\item Basis vertex $\left( -\frac{1}{d_B},\, -\frac{1}{d_B},\, 1\right)$.

The corresponding Choi state reads
\bq R_{\Phi\left[  -\frac{1}{d_B},\, -\frac{1}{d_B},\, 1 \right]}\ =\ \frac{\mathds{1}_{ABA'B'}}{d_A d_B}\, -\, \frac{1}{d_A d_B}\, \mathds{1}_{AA'}\otimes \ket{\varepsilon}\!\!\bra{\varepsilon}_{BB'}\, -\, \frac{1}{d_B^2}\,\ket{\varepsilon}\!\!\bra{\varepsilon}_{AA'}\otimes \mathds{1}_{BB'}\, +\, \ket{\varepsilon}\!\!\bra{\varepsilon}_{AA'}\otimes \ket{\varepsilon}\!\!\bra{\varepsilon}_{BB'}\, . \eq
While being aware that in general $A$ and $B$ do not need to have the same dimension, we can nevertheless introduce a sort of maximally entangled state
\bq \ket{\tilde{\varepsilon}}_{AB}\ \equiv\ \frac{1}{\sqrt{d_A}}\, \sum_{i=1}^{d_A}\, \ket{i}_A\ket{i}_B . \eq
With this notation, some calculations reveal that one can write
\bq R_{\Phi\left[  -\frac{1}{d_B},\, -\frac{1}{d_B},\, 1 \right]}\ =\ \frac{d_A(d_B^2-1)}{d_B}\ \left(\mathcal{P}_{AA'}\otimes \mathcal{P}_{BB'}\right) \left( \ket{\tilde{\varepsilon}}\!\!\bra{\tilde{\varepsilon}}_{AB} \otimes \ket{\tilde{\varepsilon}}\!\!\bra{\tilde{\varepsilon}}_{A'B'} \right) . \eq
Since the right--hand side consist of the application to a $AB|A'B'$--separable state of a map that preserves separability with respect to every cut, we must conclude that the left--hand side is indeed separable.

\item Basis vertex $\left( 1,\, -\frac{1}{d_A},\, -\frac{1}{d_A} \right)$.

The corresponding map reads
\bq \Phi \left[ 1,\, -\frac{1}{d_A},\, -\frac{1}{d_A} \right]\ =\ \mathds{1}\Tr\, +\, \mathds{1}\Tr\otimes I\, -\, \frac{1}{d_A}\, I\otimes \mathds{1}\Tr\, -\, \frac{1}{d_A}\, I\ =\ \left( \mathds{1}\Tr -\frac{I}{d_A} \right)\otimes \left(\mathds{1}\Tr + I\right) ,  \eq
and the rightmost side, being a tensor product of two entanglement--breaking maps on $A$ and $B$, is entanglement--breaking on the composite system $AB$. 

\item Basis vertex $\left( -\frac{1}{d_B},\, 1,\, -\frac{1}{d_B} \right)$.

This case is completely analogous to the previous one:
\bq \Phi \left[ -\frac{1}{d_B},\, 1,\, -\frac{1}{d_B} \right]\ =\ \mathds{1}\Tr\, -\,\frac{1}{d_B} \mathds{1}\Tr\otimes I\, +\, I\otimes \mathds{1}\Tr\, -\, \frac{1}{d_B}\, I\ =\ \left( \mathds{1}\Tr +I \right)\otimes \left(\mathds{1}\Tr - \frac{I}{d_B}\right) .  \eq

\item Culminating vertex $\left( -\frac{1}{d_B},\, -\frac{1}{d_A},\, \frac{1}{d_A d_B} \right)$.

We have
\bq \Phi \left[ -\frac{1}{d_B},\, -\frac{1}{d_A},\, \frac{1}{d_A d_B} \right]\ =\ \mathds{1}\Tr\, -\,\frac{1}{d_B} \mathds{1}\Tr\otimes I\, -\, \frac{1}{d_A} I\otimes \mathds{1}\Tr\, +\, \frac{1}{d_A d_B}\, I\ =\ \left( \mathds{1}\Tr -\frac{I}{d_A} \right)\otimes \left(\mathds{1}\Tr - \frac{I}{d_B}\right) ,  \eq
which is a tensor product of entanglement--breaking maps.

\item Culminating vertex $(1,\, 1,\, 1)$.

The last case is
\bq \Phi[1,\, 1,\, 1]\ =\ \mathds{1}\Tr +\, \mathds{1}\Tr\otimes I\, +\, I\otimes\mathds{1}\Tr +\, I\ =\ \left( \mathds{1}\Tr +I \right)\otimes \left( \mathds{1}\Tr +I \right) , \eq
again a tensor product of entanglement--breaking maps. 

\end{itemize}
\end{proof}

\vspace{2ex}
\begin{figure}[h] 
\centering
\includegraphics[height=8cm, width=8cm, keepaspectratio]{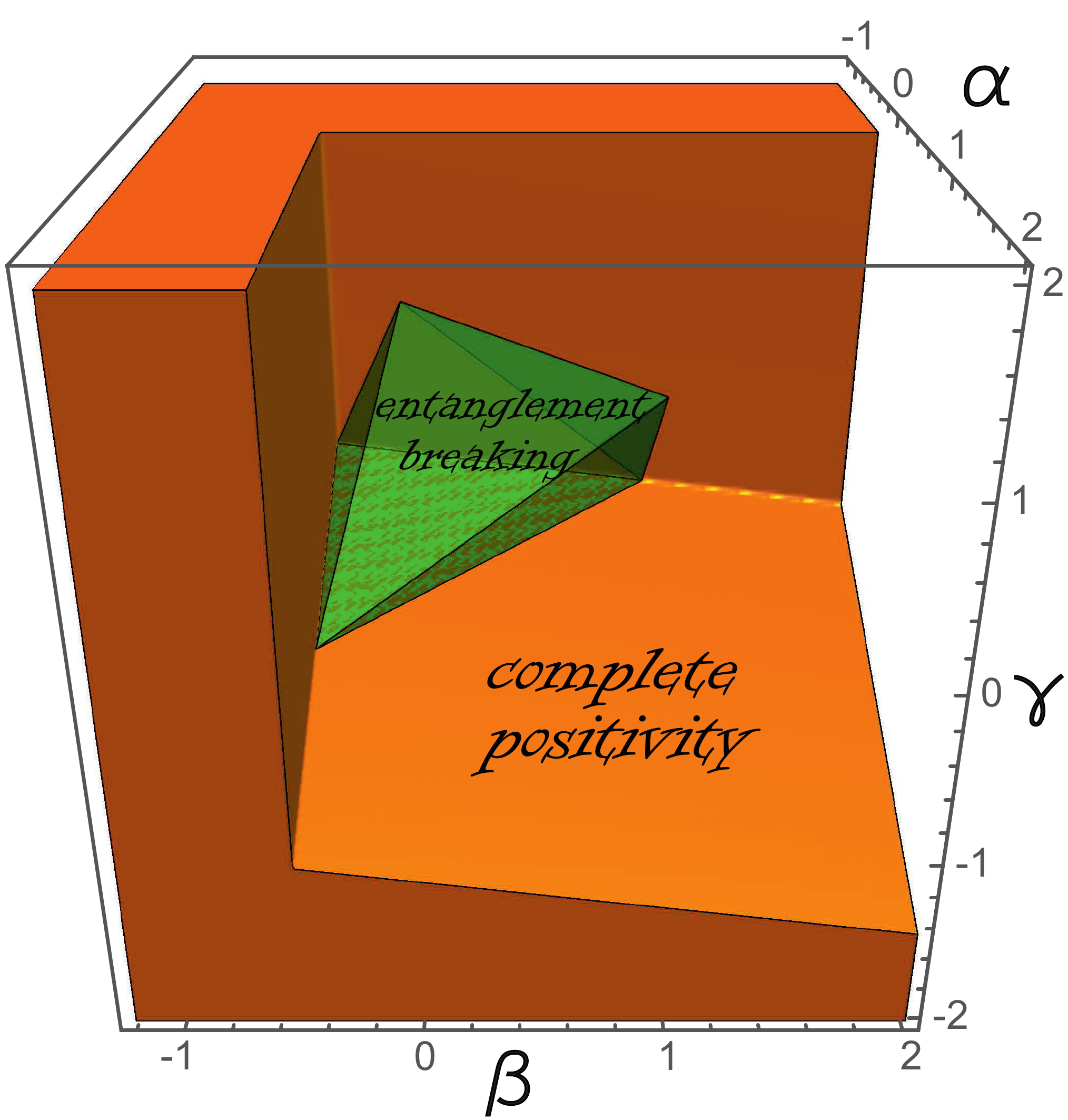}\qquad \includegraphics[height=8cm, width=8cm, keepaspectratio]{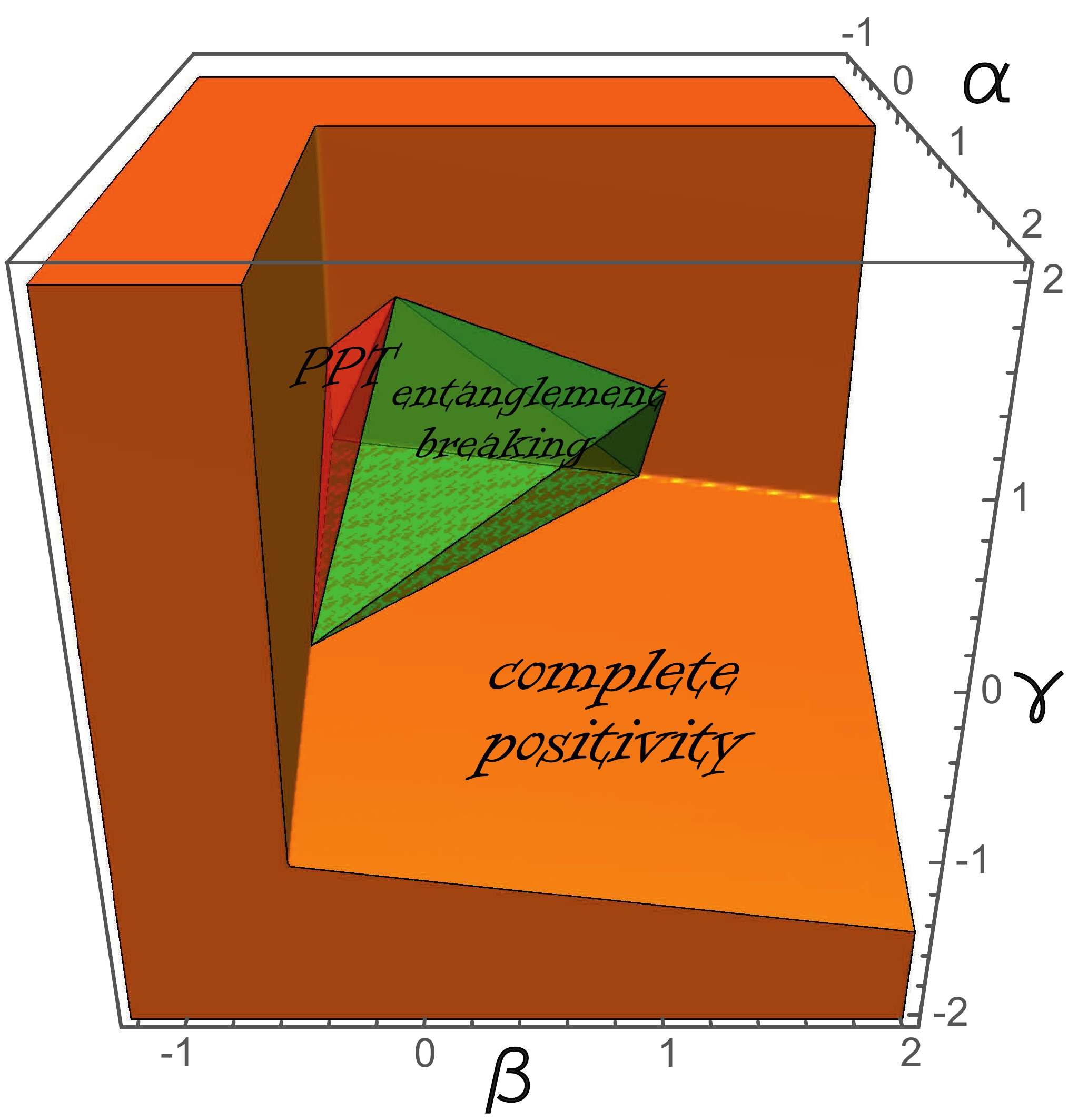}
\caption[]{The three planes identify the complete positivity region for the maps defined in \eqref{Phi} in the $\alpha,\beta,\gamma$ space. Inside that, the entanglement--breaking solid (in green) is shown on the left. On the right, we added the pyramid for which this map is PPT but not entanglement--breaking (in red). Such a region exists iff $d_A\neq d_B$; here we chose the case $d_A=2,\ d_B=6$.}
\label{EB region}
\end{figure}

From the above proofs of Theorem \ref{EB} we learnt that the PPT criterion is not sufficient for deciding separability as soon as $d_A\neq d_B$. Figure \ref{EB region} shows the $\Phi$ maps that are PPT but not entanglement--breaking, forming another pyramid with a face of the entanglement--breaking solid as a basis and the point $\big( -\frac{1}{d_B},\, -\frac{1}{d_A},\, 1-\frac{d_B-d_A}{d_A d_B}\big)$ as the culminating vertex.

\begin{rem}
Several facts of independent interest can be deduced from the above discussion. Let us recall some of them.
\begin{itemize}
\item The state
\be R_{ABA'B'}\ \equiv\ \mathds{1}_{ABA'B'}\, -\, \mathds{1}_{AA'}\otimes \ket{\varepsilon}\!\!\bra{\varepsilon}_{BB'}\, -\, \ket{\varepsilon}\!\!\bra{\varepsilon}_{AA'}\otimes \mathds{1}_{BB'}\, +\, (d_A d_B-d_B+d_A)\, \ket{\varepsilon}\!\!\bra{\varepsilon}_{AA'}\otimes \ket{\varepsilon}\!\!\bra{\varepsilon}_{BB'}\, , \label{PPT ent state} \ee
corresponding to the vertex of the PPT entangled pyramid depicted in red on the right of Figure \ref{EB region}, is a PPT entangled state of the bipartite system $AB|A'B'$.
\item If $d_A<d_B$, the map $I_A\otimes\left(\mathds{1}\Tr-I/d_A \right)_B$ acting on a bipartite system $AB$ is positive but \emph{indecomposable}, as can be seen by noting that it detects the PPT entangled state $R_{ABA'B'}$ defined by \eqref{PPT ent state}. To the extent of our knowledge, this remarkable fact has not been pointed out before. 
\item The map
\bq \Phi\left[  -\frac{1}{d_B},\, -\frac{1}{d_B},\, 1 \right]\ =\ \mathds{1}_{AB}\Tr -\, \frac{1}{d_B}\, \mathds{1}_A\Tr\otimes I_B\, -\, \frac{1}{d_B}\, I_A\otimes \mathds{1}_B \Tr +\, I_{AB} \eq
is entanglement--breaking when acting on a bipartite system $AB$ such that $d_A\leq d_B$. 
\end{itemize}
\end{rem}

\section{Entanglement--Annihilating} \label{sec EA}

An entanglement--annihilating map (see \cite{EA1,EA2,EA3,EA4}) is a positive map $\Xi_{AB}$ acting on a bipartite system $AB$ such that $\Xi_{AB}(R_{AB})$ is a separable state for every input state $R_{AB}$ (as can be easily seen, it suffices to choose pure input states). Observe that this has nothing to do with the notion of entanglement--breaking. While the latter is a completely positive map that always breaks the entanglement between $AB$ \emph{as a whole} and the external world, the former simply breaks the \emph{internal} entanglement between $A$ and $B$.
As extreme examples, on the one hand take the channel $\ket{\varepsilon}\!\!\bra{\varepsilon}_{AB} \Tr$ acting on a bipartite system $AB$ such that $d_A=d_B$. Obviously, such a channel is entanglement--breaking (because it involves tracing away the whole state) but not entanglement--annihilating (because the output state is entangled). On the other hand, consider the map $\mathds{1}_{AB} \Tr - I_{AB}$, that is entanglement--annihilating on $AB$ without even being completely positive. The former observation follows from the fact that $\mathds{1}-\ket{\Psi}\!\!\bra{\Psi}$ is separable for all pure states $\ket{\Psi}$, as it belongs to the Gurvits and Barnum separable ball around the identity \cite{GurvitsBarnum}.

Throughout this section, we want to study the region in the $\alpha,\beta,\gamma$ space such that the corresponding map defined by \eqref{Phi} is entanglement--annihilating. Obviously, there is a naive necessary criterion that must be satisfied: if $\Phi_{AB}$ has to be entanglement--annihilating, then $(I_A\otimes T_B)\, \Phi_{AB}$ must be positive ($T_B$ denotes partial transposition). Maps for which the latter condition holds are called PPT--inducing in \cite{EA4}.

\begin{thm} \label{EA}
Given a map $\Phi[\alpha,\beta,\gamma]$ defined by \eqref{Phi}, the following are equivalent:
\begin{enumerate}
\item $\Phi$ is is entanglement--annihilating ;
\item $\Phi$ is positive and PPT--inducing;
\item $\Phi$ is positive and in addition $\gamma\leq \alpha+\beta+2$;
\item $\alpha,\beta\geq -1$, \ $\gamma+\frac{\alpha+\beta}{n}+1\geq 0$, \ $\alpha+\beta+\gamma+1\geq 0$ \ and \ $\gamma\leq \alpha+\beta+2$.
\end{enumerate}
\end{thm}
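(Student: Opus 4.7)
The plan is to prove the four conditions equivalent by a short cycle, handling the trivial links first. The equivalence (3)$\Leftrightarrow$(4) is just a rewriting: the system in (4) merges the two cases of the positivity characterisation from Theorem~\ref{positivity} together with the extra inequality $\gamma\leq\alpha+\beta+2$. The implication (1)$\Rightarrow$(2) is immediate, since separable states are PPT and hence the composition $(I_A\otimes T_B)\Phi$ sends each pure state to a positive operator. The substance therefore lies in (2)$\Leftrightarrow$(3) and (3)$\Rightarrow$(1).

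For (2)$\Leftrightarrow$(3), the plan is to exploit the invariance of $\Phi$ under local unitaries to restrict to Schmidt--form inputs $\ket{\Psi}=\sum_i\sqrt{\lambda_i}\ket{ii}$. Using the calculation already carried out in the proof of Proposition~\ref{red pos prob}, the output $\Phi(\ket{\Psi}\!\bra{\Psi})$ is block-diagonal with respect to the splitting $\text{span}\{\ket{ii}\}\oplus\text{span}\{\ket{ij}:i\neq j\}$. Applying $I_A\otimes T_B$ moves the $\ket{ii}\!\bra{jj}$ matrix elements (coefficient $\gamma\sqrt{\lambda_i\lambda_j}$) onto the $\ket{ij}\!\bra{ji}$ positions, producing a $2\times 2$ block in each $\{\ket{ij},\ket{ji}\}$ subspace with $i<j$. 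Positivity of each such block reduces to
\bq
(1+\alpha\lambda_j+\beta\lambda_i)(1+\alpha\lambda_i+\beta\lambda_j)\,\geq\,\gamma^2\lambda_i\lambda_j\, .
\eq
Reparametrising through $s=\lambda_i+\lambda_j\in[0,1]$ and $p=\lambda_i\lambda_j\in[0,s^2/4]$, the left-hand side equals $1+(\alpha+\beta)s+\alpha\beta s^2+(\alpha-\beta)^2 p$, so the worst case is $\lambda_i=\lambda_j=1/2$. The inequality then factorises as $\bigl(1+\tfrac{\alpha+\beta+\gamma}{2}\bigr)\bigl(1+\tfrac{\alpha+\beta-\gamma}{2}\bigr)\geq 0$, and under the positivity hypotheses its only non-trivial consequence is $\gamma\leq\alpha+\beta+2$.

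For (3)$\Rightarrow$(1), the plan is to combine the convexity of the entanglement--annihilating set with an explicit per-pair decomposition. The polyhedron cut out by~(4) has four vertices, namely $(-1,-\tfrac12,\tfrac12)$, $(-\tfrac12,-1,\tfrac12)$, $(-1,1,-1)$, $(1,-1,-1)$, plus four unbounded extreme-ray directions of the form $(0,1,1)$, $(0,1,-1/n)$ and their $\alpha\leftrightarrow\beta$ symmetric counterparts. At each vertex, the strategy is to decompose $\Phi(\ket{\Psi}\!\bra{\Psi})$ as a sum over unordered pairs $\{i,j\}$ of $4$-dimensional ``local blocks'' $B_{ij}$ supported on $\text{span}\{\ket{ii},\ket{ij},\ket{ji},\ket{jj}\}$. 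Each $B_{ij}$ sits inside an effective $\mathbb{C}^2\otimes\mathbb{C}^2$ bipartite subsystem, in which PPT already implies separability; the relevant block PPT condition is precisely the inequality derived in step~2. It then remains to split the $\ket{ii}\!\bra{ii}$ diagonal weight $q_{ii}=1+(\alpha+\beta+\gamma)\lambda_i$ across the $(n-1)$ blocks containing the index $i$, via auxiliary coefficients $a^{(ij)}_i \geq 0$ satisfying $a^{(ij)}_i a^{(ij)}_j\geq\gamma^2\lambda_i\lambda_j$ and $\sum_{j\neq i}a^{(ij)}_i\leq q_{ii}$, any residual being absorbed into separable pieces $\ket{i}\!\bra{i}_A\otimes\ket{i}\!\bra{i}_B$. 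The uniform ansatz $a^{(ij)}_i=|\gamma|\lambda_j$ checks out at each vertex. Convex combinations of vertices then cover the bounded hull, and the extreme rays are handled by analogous (or limiting) constructions.

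The hardest part will be this last step: verifying rigorously that the per-pair weight-distribution ansatz closes up at every vertex and for every Schmidt spectrum, and extending the decomposition along the unbounded extreme-ray directions so that the separable expansion of $\Phi(\ket{\Psi}\!\bra{\Psi})$ covers the entire polyhedron (4).
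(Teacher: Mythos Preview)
Your route through $(2)\Leftrightarrow(3)$ is sound and actually stronger than the paper's, which merely plugs in $\ket{\Psi}=\tfrac{1}{\sqrt 2}(\ket{11}+\ket{22})$ to get $(2)\Rightarrow(3)$. One small point: after minimising over $p$ you still owe an argument in $s$, but the resulting expression factorises as $\big(1+\tfrac{\alpha+\beta+|\gamma|}{2}s\big)\big(1+\tfrac{\alpha+\beta-|\gamma|}{2}s\big)$, and both linear factors stay nonnegative on $[0,1]$ under positivity together with $\gamma\le\alpha+\beta+2$, so this closes.

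For $(4)\Rightarrow(1)$ your approach genuinely differs from the paper's. The paper decomposes the region into its four extreme half-lines and, on the difficult upper ones $\Phi[-1,\beta,\beta+1]$, shows that $\big(\mathds{1}\Tr-\tfrac{I}{2}\big)\otimes(\mathds{1}\Tr-I)$ is entanglement-annihilating by a Hadamard-channel trick: it applies $I\otimes\zeta_A$, with $A=\mathds{1}-2D_\lambda+\ket{\psi}\!\bra{\psi}\geq 0$ furnished by Theorem~\ref{pos chi}, to the known separable state \eqref{notable state}. Your per-pair $2\times 2$ block decomposition is more elementary and does succeed at the four vertices: with $a^{(ij)}_i=|\gamma|\lambda_j$ each $B_{ij}$ has a rank-one outer $2\times2$ minor, is PPT by the step-$2$ inequality (hence separable in $\mathbb C^2\otimes\mathbb C^2$), and the leftover diagonal mass $1+(\alpha+\beta+\gamma)\lambda_i-|\gamma|(1-\lambda_i)$ equals either $0$ or $\tfrac12(1-\lambda_i)\geq 0$ at the vertices.

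The real gap is the unbounded rays. The uniform ansatz breaks there --- along $\Phi[-1,\beta,\beta+1]$ the residual is $(3\beta+1)\lambda_i-\beta$, negative for small $\lambda_i$ once $\beta>0$ --- and in fact for $n\geq 4$ no reshuffling of the $a^{(ij)}_i$ can work far out on the ray, since the constraint $a^{(ij)}_i a^{(ij)}_j\geq\gamma^2\lambda_i\lambda_j$ forces a total diagonal cost $\geq|\gamma|(n-1)$ at the uniform Schmidt spectrum, eventually exceeding the available budget $\sum_i q_{ii}=n+\alpha+\beta+\gamma$. The fix is not to push the block decomposition but to observe that the four recession directions are the maps $I_A\otimes(\mathds{1}_B\Tr+I_B)$, $I_A\otimes(\mathds{1}_B\Tr-\tfrac{1}{n}I_B)$ and their $A\leftrightarrow B$ mirrors, each of which already sends every pure input to a separable output: the first is the identity tensor an entanglement-breaking depolariser; for the second, split off the separable piece $\rho_\Psi\otimes(\mathds{1}_B-P_B)$ (with $P_B$ the Schmidt-support projector), after which the remainder lives in an effective $n\times n$ system where $\mathds{1}\Tr-\tfrac{I}{n}$ \emph{is} entanglement-breaking. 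Since separability of outputs is preserved under conic combinations, vertex plus direction covers the polyhedron and your argument completes --- without the paper's Hadamard machinery.
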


\begin{proof} $ \\ $
Let us prove the various implications one by one.
\begin{itemize}
\item[$1.\Rightarrow 2.$] We already saw that an entanglement--annihilating map is necessarily PPT--inducing (and obviously positive).
\item[$2.\Rightarrow 3.$] Input to $\Phi$ the pure state $\ket{\Psi}=\frac{\ket{11}+\ket{22}}{\sqrt{2}}$; the positivity of the partial transpose of the resulting state requires $\gamma\leq \alpha+\beta+2$. 
\item[$3.\Rightarrow 4.$] Trivially obtained by using Theorem \ref{positivity}. The region identified by these conditions is represented in Figure \ref{EA region}.
\item[$4. \Rightarrow 1.$] As one could expect, this is the only point that requires a bit of care. Observing Figure \ref{EA region}, we note that every point of the positive and PPT--inducing region is a convex combination of four points on the four half--lines forming the edges of the set. If we prove that all these four half--lines are composed entirely of entanglement--annihilating maps, we are done.
\begin{itemize}

\item The two lower half--lines have already been studied in the proof of Theorem \ref{positivity}. The one on the right of Figure \ref{EA region} is composed of maps of the form
\bq \Phi\left[ -1,\,\beta,\, -1-\frac{\beta-1}{n} \right]\ =\ (\mathds{1}\Tr+I)\otimes (\mathds{1}\Tr-I)\, +\, (\beta-1)\, I\otimes \Big(\mathds{1}\Tr-\frac{1}{n} I\Big)\, , \eq
with $\beta\geq 1$. Both of the addends of the above equation are entanglement--annihilating, because they are tensor products of a positive and an entanglement--breaking map on the two subsystems. Consequently, their sum is entanglement--annihilating as well. The same reasoning applies to the other lower half--line, that is symmetrically composed of maps of the form $\Phi\big[ \alpha,-1,-1-\frac{\alpha-1}{n} \big]$ with $\alpha\geq 1$.

\item The two upper half--lines of Figure \ref{EA region} are again symmetrically related and can be treated in the same way. The one on the right, for instance, is composed of maps of the form
\bq \Phi\left[ -1,\,\beta,\, \beta+1 \right]\ =\ \Big(\mathds{1}\Tr-\frac{I}{2}\Big)\otimes (\mathds{1}\Tr-I)\, +\, \Big(\beta+\frac{1}{2}\Big)\, I\otimes \left(\mathds{1}\Tr+ I \right)\, , \eq
with $\beta\geq -\frac{1}{2}$. The second addend is clearly entanglement--annihilating because it is a tensor product of the identity and an entanglement--breaking map. Proving that also the first addend is entanglement--annihilating is not completely trivial. Consider an arbitrary pure state $\ket{\Psi}_{AB}=\sum_{i=1}^n \sqrt{\lambda_i}\ket{i}_A\ket{i}_B$, that can be assumed to be Schmidt decomposed in the computational basis without loss of generality (and satisfying $\lambda_1\geq\ldots\geq \lambda_n>0,\ \sum_i \lambda_i = 1$). Define the reduced state $\rho_{\Psi}\equiv \Tr_B\ket{\Psi}\!\!\bra{\Psi}_{AB}= \Tr_A\ket{\Psi}\!\!\bra{\Psi}_{AB}$ and write
\begin{align*}
2\ \Big(\mathds{1}\Tr-\frac{I}{2}\Big)\otimes (\mathds{1}\Tr-I)\, (\ket{\Psi}\!\!\bra{\Psi})\ &=\ 2\, \mathds{1}\, -\, 2\, \mathds{1}\otimes \rho_{\Psi} \, -\, \rho_\Psi\otimes\mathds{1}\, +\, \ket{\Psi}\!\!\bra{\Psi}\ =\\
&=\ \sum_{i\neq j} \sqrt{\lambda_i\lambda_j}\, \ket{ii}\!\!\bra{jj}\, +\, \sum_{i,j} (2 - 2\lambda_j -\lambda_i + \lambda_i \delta_{ij})\, \ket{ij}\!\!\bra{ij}\, .
\end{align*}
Basically, our strategy to prove that the above state is separable will consist in a comparison with a known separable state. Defining $F\equiv\sum_{i\neq j} \ket{ij}\!\!\bra{ij}$, the operator
\be F\, +\, n\, \ket{\varepsilon}\!\!\bra{\varepsilon}\ =\ \sum_{i\neq j} \ket{ii}\!\!\bra{jj}\, +\, \sum_{i,j} \ket{ij}\!\!\bra{ij}  \label{notable state} \ee
is well--known to be separable. A first strategy could be based on a conjugation by a local diagonal matrix $D_\lambda=\text{diag}(\lambda_1,\ldots,\lambda_n)$. One could write
\begin{align*}
D_\lambda \otimes \mathds{1}\ \left( F\, +\, n\, \ket{\varepsilon}\!\!\bra{\varepsilon} \right)\ D_\lambda \otimes\mathds{1}\ &=\ \sum_{i\neq j} \sqrt{\lambda_i \lambda_j}\, \ket{ii}\!\!\bra{jj}\, +\, \sum_{i,j} \lambda_i \ket{ij}\!\!\bra{ij}\ =\\
&=\ 2\, \mathds{1}\, -\, 2\, \mathds{1}\otimes \rho_{\Psi} \, -\, \rho_\Psi\otimes\mathds{1}\, +\, \ket{\Psi}\!\!\bra{\Psi}\, -\\
&-\, \sum_{i,j}\, (2 - 2\lambda_i - 2\lambda_j + \lambda_i \delta_{ij})\, \ket{ij}\!\!\bra{ij}\, .
\end{align*}
If $2 - 2\lambda_i - 2\lambda_j + \lambda_i \delta_{ij}\geq 0$ for all $i,j$ we would be done, because carrying the last addend on the left--hand side of the equation would yield a separable decomposition of the required state. However, the latter inequality fails to hold if $i=j=1$ and $\lambda_1>2/3$. To include also this case, we must think of something different. 

Construct $\ket{\psi}=\sum_i \sqrt{\lambda_i}\, \ket{i}$ and use Theorem \ref{pos chi} to claim that 
\bq A\ \equiv\ \mathds{1} - 2 D_\lambda + \ket{\psi}\!\!\bra{\psi}\ =\ \sum_{i\neq j} \sqrt{\lambda_i\lambda_j}\, \ket{i}\!\!\bra{j}\, +\, \sum_i\, (1-\lambda_i)\, \ket{i}\!\!\bra{i}\ \geq\ 0\, . \eq
Then, define the completely positive map $\zeta_A$ acting as $\zeta_A(X)\equiv A\circ X$, where $\circ$ denotes Hadamard product. One has
\begin{align*}
(I\otimes \zeta_A)(F+n\ket{\varepsilon}\!\!\bra{\varepsilon})\ &=\ \sum_{i\neq j} \sqrt{\lambda_i\lambda_j}\, \ket{ii}\!\!\bra{jj}\, +\, \sum_{i,j}\, (1-\lambda_j)\, \ket{ij}\!\!\bra{ij}\ =\\
&=\ 2\, \mathds{1}\, -\, 2\, \mathds{1}\otimes \rho_{\Psi} \, -\, \rho_\Psi\otimes\mathds{1}\, +\, \ket{\Psi}\!\!\bra{\Psi}\, -\, \sum_{i,j}\, (1-\lambda_i-\lambda_j+\lambda_i \delta_{ij})\, \ket{ij}\!\!\bra{ij}\, .
\end{align*}
Since if $1-\lambda_i-\lambda_j+\lambda_i \delta_{ij}\geq 0$ for all $i,j$ (thanks to $\sum_i \lambda_i = 1$), we can conclude.
\end{itemize}
\end{itemize}
\end{proof}

\vspace{2ex}
\begin{figure}[h] 
\centering
\includegraphics[height=10cm, width=10cm, keepaspectratio]{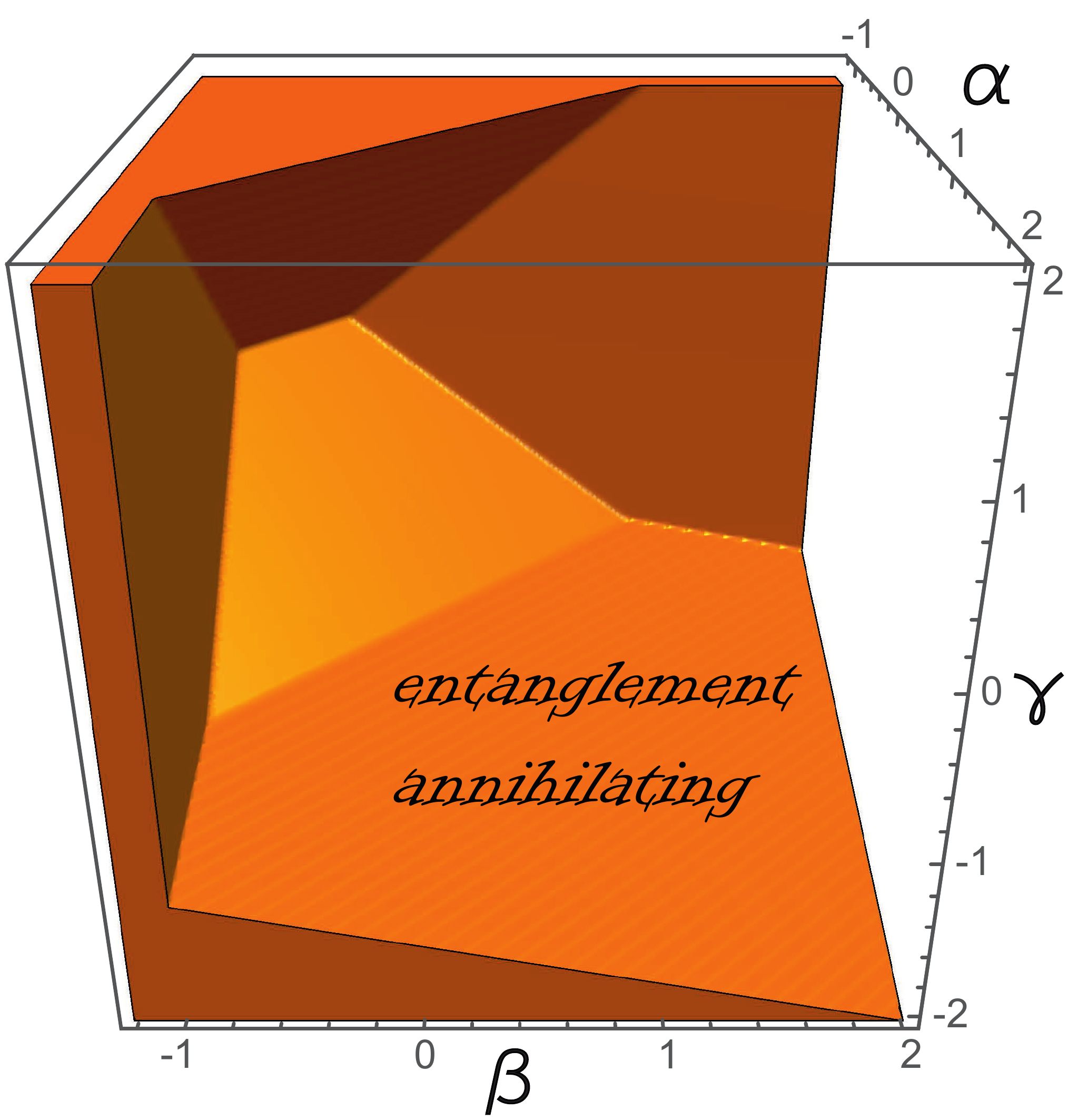}
\caption[]{The convex region outside of the solid represents the parameter range for which the channel \eqref{Phi} is entanglement--annihilating. Here the case $n=4$ is shown. Compare with Figure \ref{Positivity region} and note the extra plane on the top identifying the additional condition $\gamma\leq \alpha+\beta+2$.}
\label{EA region}
\end{figure}

\begin{rem}
We found particularly surprising that the state $ 2\, \mathds{1}\, -\, 2\, \mathds{1}\otimes \rho_{\Psi} \, -\, \rho_\Psi\otimes\mathds{1}\, +\, \ket{\Psi}\!\!\bra{\Psi} $ is separable for all the global pure states $\ket{\Psi}$, especially because of the techniques we employed to prove this fact. We will see in a moment that besides being interesting in itself, is also useful in closing some open problems recently raised in the literature.
\end{rem}

\section{Applications}

Throughout this section, we apply the results in Section \ref{sec EA} to solve some open problems recently posed in \cite{EA3, EA4}. In those papers, the scenario in which a depolarizing channel acts locally on each (equal) side of a bipartite quantum system is considered. More specifically, using the notation $\Delta_q=qI+(1-q)\frac{\mathds{1}}{d} \Tr$ with $-\frac{1}{d^2-1}\leq q\leq 1$, it is asked what is the condition on $q_1,q_2$ under which $\Delta_{q_1}\otimes \Delta_{q_2}$ becomes entanglement--annihilating. In \cite{EA3} (see equation (5)), a \emph{sufficient condition} is found, that reads
\begin{equation} (d^2-1) q_1 q_2\, \leq\, 1\, +\, \frac{(d-2)(d+1)}{d+2}\, (q_1+q_2)\, . \label{FZ eq1} \end{equation}
In particular, for the symmetric case $q_1=q_2=q$, the explicit form
\begin{equation} q\, \leq\, \frac{\,d-2+d\sqrt{\frac{2d}{d+1}}\,}{(d-1)(d+2)} \label{FZ eq1 sym} \end{equation}
is deduced from \eqref{FZ eq1}. At the same time, the condition
\begin{equation}
q\, \leq\, \frac{1+\sqrt{3}}{d+1+\sqrt{3}} \label{FZ eq2}
\end{equation}
is shown to be necessary in order for $\Delta_q\otimes \Delta_q$ to be PPT--inducing. In \cite{EA4} it is also conjectured, that \eqref{FZ eq2} is also \emph{sufficient} for enforcing the PPT--inducing behaviour (if the $q$ range is restricted to the complete positivity interval). Observe that:
\begin{enumerate}[label=\roman*)]
\item there is a gap between the region \eqref{FZ eq1 sym} in which the entanglement annihilation is guaranteed and the region \eqref{FZ eq2} outside which the global map is not even PPT--inducing;
\item in \cite{EA4} it is conjectured that inside that gap the map is still PPT--inducing, while no supposition is made about the entanglement annihilation.
\end{enumerate}

Applying Theorem \ref{EA} straightforwardly solves all these problems, showing that \eqref{FZ eq2} is indeed a \emph{necessary and sufficient} condition for the map $\Delta_q\otimes \Delta_q$ to be PPT--inducing or, equivalently, entanglement--annihilating.  In particular, this proves the conjecture in \cite{EA4}. Before discussing the details of the above solution, we stress that it is not necessary to assume the complete positivity of the local maps (as done in both \cite{EA3} and \cite{EA4}) in order for our problem to make sense. On the contrary, it is enough to demand that the product $\Delta_{q_1}\otimes \Delta_{q_2}$ is positive. Remarkably, this is the case iff the two local maps are themselves positive, because in that case they are either completely positive or completely copositive (or both). The local positivity conditions read
\begin{equation*} -\frac{1}{d-1}\, \leq\, q_1,q_2\, \leq\, 1\, . \end{equation*}
By comparison, remind that the complete positivity conditions are
\begin{equation*} -\frac{1}{d^2-1}\, \leq\, q_1,q_2\, \leq\, 1\, . \end{equation*}
Now, we are in position to show our solution to the aforementioned open questions.

\begin{cor}
The product $\Delta_{q_1}\otimes \Delta_{q_2}$ of local depolarizing maps is entanglement--annihilating if and only if, besides the positivity conditions, the inequality
\begin{equation} (d^2+2d-2) q_1 q_2 \ \leq\ 2\, +\, (d-2) (q_1 + q_2) \label{improv 1} \end{equation}
holds. In the symmetric case $q_1=q_2=q$, \eqref{improv 1} becomes simply
\begin{equation} -\, \frac{\sqrt{3}-1}{d+1-\sqrt{3}} \,\leq\, q\, \leq\, \frac{1+\sqrt{3}}{d+1+\sqrt{3}}\, . \label{improv 1 sym} \end{equation}
\end{cor}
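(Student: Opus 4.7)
The plan is to recognise that $\Delta_{q_1}\otimes\Delta_{q_2}$ lies (up to an overall positive rescaling) inside the three-parameter family $\Phi[\alpha,\beta,\gamma]$ treated in Theorem \ref{EA}, and then simply read off the entanglement--annihilating condition from there.

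First, expand $\Delta_{q_i}=q_i\,I+(1-q_i)\tfrac{\mathds{1}}{d}\text{Tr}$ on each tensor factor. The four cross terms are mutually orthogonal in the basis $\{\mathds{1}_{AB}\text{Tr},\,\mathds{1}_A\text{Tr}\otimes I,\,I\otimes\mathds{1}_B\text{Tr},\,I_{AB}\}$, with coefficients $\tfrac{(1-q_1)(1-q_2)}{d^2}$, $\tfrac{(1-q_1)q_2}{d}$, $\tfrac{q_1(1-q_2)}{d}$, $q_1q_2$ respectively. Assuming $q_1,q_2<1$ (the boundary cases $q_i=1$ will follow by continuity, and can also be checked directly), multiply the map by the positive scalar $\tfrac{d^2}{(1-q_1)(1-q_2)}$. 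The rescaled map is precisely $\Phi[\alpha,\beta,\gamma]$ with
\bq \alpha=\frac{dq_2}{1-q_2},\qquad \beta=\frac{dq_1}{1-q_1},\qquad \gamma=\frac{d^2q_1q_2}{(1-q_1)(1-q_2)}. \eq
Since rescaling by a positive constant preserves positivity, PPT--inducement, and entanglement annihilation, the claim about $\Delta_{q_1}\otimes\Delta_{q_2}$ will follow directly from the analogous claim about $\Phi[\alpha,\beta,\gamma]$.

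Next, I would invoke Theorem \ref{EA}: the rescaled $\Phi$ is entanglement--annihilating iff it is positive and $\gamma\leq\alpha+\beta+2$. The paragraph preceding the corollary already records that $\Delta_{q_1}\otimes\Delta_{q_2}$ is positive iff each $\Delta_{q_i}$ is, namely iff $-\tfrac{1}{d-1}\leq q_i\leq 1$ (so the positivity datum in the corollary statement reduces to the two local conditions; one may also verify this by plugging the above $(\alpha,\beta,\gamma)$ into the positivity conditions of Theorem \ref{positivity}). It then remains to rewrite the inequality $\gamma\leq\alpha+\beta+2$ in terms of $q_1,q_2$. Clearing the positive denominator $(1-q_1)(1-q_2)$ and collecting like terms gives
\bq d^2q_1q_2\,\leq\,dq_2(1-q_1)+dq_1(1-q_2)+2(1-q_1)(1-q_2), \eq
which after cancellation simplifies to $(d^2+2d-2)q_1q_2\leq 2+(d-2)(q_1+q_2)$, i.e.\ exactly \eqref{improv 1}.

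Finally, in the symmetric case $q_1=q_2=q$ this becomes the quadratic $(d^2+2d-2)q^2-2(d-2)q-2\leq 0$. The discriminant is $4(d-2)^2+8(d^2+2d-2)=12d^2$, and the leading coefficient factors as $(d+1)^2-3=(d+1-\sqrt{3})(d+1+\sqrt{3})$; the roots of the quadratic therefore simplify cleanly to $\tfrac{(d-2)\pm d\sqrt{3}}{(d+1-\sqrt{3})(d+1+\sqrt{3})}$, which after a routine rationalisation equal $\tfrac{1+\sqrt{3}}{d+1+\sqrt{3}}$ and $-\tfrac{\sqrt{3}-1}{d+1-\sqrt{3}}$. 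A brief check that $-\tfrac{\sqrt{3}-1}{d+1-\sqrt{3}}>-\tfrac{1}{d-1}$ (equivalent to $\sqrt{3}-1<1$) ensures that this interval lies inside the positivity interval, so the entanglement--annihilating condition reduces to \eqref{improv 1 sym}. There is no real obstacle to this proof beyond the bookkeeping above; the only subtlety is the boundary $q_i=1$ where the rescaling degenerates, and there one can either appeal to continuity or handle the map $I_A\otimes\Delta_{q_2}$ by noting that entanglement annihilation is then equivalent to $\Delta_{q_2}$ being entanglement--breaking, which is consistent with \eqref{improv 1} specialised to $q_1=1$.
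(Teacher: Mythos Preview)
Your proof is correct and follows essentially the same route as the paper: both identify $\Delta_{q_1}\otimes\Delta_{q_2}$ with a rescaled $\Phi[\alpha,\beta,\gamma]$ for the same parameters $\alpha=\tfrac{dq_2}{1-q_2}$, $\beta=\tfrac{dq_1}{1-q_1}$, $\gamma=\tfrac{d^2q_1q_2}{(1-q_1)(1-q_2)}$, then invoke condition~3 of Theorem~\ref{EA} and simplify. Your version is in fact slightly more careful, handling the degenerate boundary $q_i=1$ explicitly and spelling out the quadratic-formula computation in the symmetric case; note also that your rescaling factor $\tfrac{d^2}{(1-q_1)(1-q_2)}$ is the correct one (the paper's displayed prefactor $q_1q_2$ appears to be a typo).
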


\begin{proof}
We can apply Theorem \ref{EA} in the form of condition 3. As already observed, the product $\Delta_{q_1}\otimes \Delta_{q_2}$ is always positive when the two local maps are positive. Therefore, writing
\begin{equation*}
\Delta_{q_1}\otimes \Delta_{q_2}\ =\ q_1 q_2\ \Phi \left[\frac{d q_2}{1-q_2},\, \frac{d q_1}{1-q_1},\, \frac{d^2 q_1 q_2}{(1-q_1)(1-q_2)} \right]\, ,
\end{equation*}
we have only to impose
\begin{equation*}
\frac{d^2 q_1 q_2}{(1-q_1)(1-q_2)} \ \leq\ \frac{d q_2}{1-q_2}\, +\, \frac{d q_1}{1-q_1}\, +\, 2\, ,
\end{equation*}
which becomes \eqref{improv 1} after elementary algebraic manipulations. Deducing \eqref{improv 1 sym} is now a simple exercise. Observe that the lower bound on $q$ expressed by \eqref{improv 1 sym} is superfluous if the complete positivity condition is imposed (and in that case we end up with \eqref{FZ eq2}), but must be retained if only the positivity is imposed.
\end{proof}

We want to answer another question that is left open in \cite{EA3}. Besides the local depolarizing noise, in that paper a global depolarizing noise of the form $\Delta^{AB}_q=qI_{AB}+(1-q)\frac{\mathds{1}_{AB}}{d_A d_B}\Tr$ is also considered, in the simplest case $d_A=d_B=d$. It is observed that $\Delta_q^{AB}$ is not PPT--inducing if $q>\frac{2}{d^2+2}$, but an explicit entanglement--annihilating construction is provided only for $q\leq\frac{d+2}{(d+1)(d^2-d+2)}$.

It is very simple to observe that $\Delta^{AB}_{2/(d^2+2)}=\frac{1}{d^2+2}(\mathds{1}\Tr +2I)$ is already entanglement--annihilating, indeed. This follows easily from Theorem \ref{EA}, but is also a consequence of the well--known fact \cite{VidalTarrach} that $\mathds{1}+2\rho$ is separable for all normalized density matrices $\rho$ on a bipartite system.

\section{Conclusion}

This concludes the characterization of the extended depolarizing channel. As we have shown, this natural generalisation to bipartite systems of the paradigmatic noisy channel displays a rich structure that nonetheless admits analytical solutions. Besides explicitly working out the parameter regions for which the channel is positive, completely positive, entanglement--breaking and entanglement--annihilating, we have furthermore provided an example of a positive indecomposable map, constructed a notable entanglement--annihilating map and solved some outstanding problems in entanglement annihilation. 
Along the proofs we make some detailed observations on the Hadamard product and the features that the separability problem exhibits in the presence of a symmetry.\\

The maps we consider naturally emerge in situations where two separate systems are subject to white noise and thus provide a useful tool in predicting the physical impact of noise on entanglement. Furthermore they provide new ways to reveal bound entanglement and an analytical characterization of entanglement for a natural class of states. We hope that the observations we point out elucidate certain facts in a way that is useful also for future applications. The solution of the depolarizing channel now provides a promising basis for studying natural generalisations. First of all one should look at straightforward extensions to the multipartite case, which will involve many more parameters, but keep the local unitary symmetries (in fact the number of parameters scales exponentially in the number of parties only). Another potential path to pursue is to study the case of coloured noise. While this would in general lead to a number of parameters growing in the dimension, thermal noise in equidistant Hamiltonians would still give a physically important five parameter family that may yield tractable solutions. \\

We thank Andreas Winter for useful comments and discussion. LL and MH were supported by the Spanish MINECO Project No. FIS2013-40627-P and by the Generalitat de Catalunya CIRIT Project No. 2014 SGR 966. Furthermore, MH acknowledges funding from the Swiss National Science Foundation (AMBIZIONE PZ00P2$\_$161351), with the support of FEDER funds.

\appendix

\section{Entanglement--related properties of the Hadamard product} \label{app Schur}

In general, the Hadamard product (also called \emph{Schur product}) between two matrices $M$ and $N$ is just the element--wise product, that is, $(M\circ N)_{ij}\equiv M_{ij} N_{ij}$. For an introduction to its basic properties, we refer the reader to Chapter 5 of \cite{H&J2}. The main feature of this operation is that the cone of positive matrices is closed with respect to it; in other words, if $M\geq 0$ and $N\geq 0$ then $M\circ N\geq 0$ (this is usually called \emph{Schur product theorem}). Actually, it is easily verified that for a given $A\geq 0$ the \emph{Hadamard channel} $\zeta_A(\cdot)\equiv A\circ(\cdot)$ is not only positive but even completely positive.

Thanks to the Schur product theorem, given two quantum states $\rho,\sigma$, it makes sense to consider the (unnormalized) state $\rho\circ\sigma$. We stress that the definition of Hadamard product is explicitly dependent on the basis we choose to represent the operators $\rho,\sigma$ as matrices. As it turns out, if $\rho_{AB},\sigma_{AB}$ are bipartite states and we fix a local basis to represent them (as we will always do from now on), then the Hadamard product behaves well with respect to the entanglement properties of the input states. Ultimately, this comes from the fact that $\rho_{AB}\circ\sigma_{AB}$ can be obtained stochastically from $\rho_{AB}\otimes\sigma_{A'B'}$ through a local measurement on the bipartite system $AA'|BB$, as \eqref{Had LOCC} shows.

As a first observation, we noticed how this implies that \emph{Hadamard multiplying two separable states yields another separable state}. However, there are examples of weakly entangled states that can be (stochastically) efficiently distilled by taking Hadamard powers. This holds also in the multipartite setting. For instance, the computation described in \cite{GMESchur} (together with some later results from \cite{Buchy}) shows the existence of $(n-2)$--separable $n$--qubit states having a genuinely multipartite entangled Hadamard square.
As a generalisation of the separability--preserving property of the Hadamard product, let us analyse the behaviour of the Schmidt rank function $SR(\cdot)$. Clearly, \eqref{Had LOCC} implies that $SR(\rho\circ \sigma)\leq SR_{AA'|BB'}(\rho_{AB}\otimes\sigma_{A'B'})=SR(\rho) SR(\sigma)$. Now, we proceed to show that this bound can always be saturated, as long as it does not conflict with the trivial requirements $SR(\rho_{AB})\leq\min\{d_A,d_B\}\equiv n$.

Before coming to the details, we remind the reader of some simple facts.
\begin{enumerate}[label=\roman*)]
\item The rank of the $n\times m$ Vandermonde matrix $M_{ij}=\lambda_j^i$ is always equal to $\min\{n,k\}$, where $k$ is the number of distinct elements among the complex numbers $\lambda_1,\ldots,\lambda_m$.
\item The Schmidt rank of a vector $\ket{Z}=\sum_i \ket{v_i}\ket{w_i}$ coincides with the rank of $v^T w$, where $v$ is a matrix defined via $\ket{v_i}=\sum_j v_{ij} \ket{j}$ for some fixed computational basis, and $w$ is constructed analogously. As a first consequence, we deduce that $SR(Z)=r$ whenever $\ket{Z}=\sum_{i=1}^r \ket{v_i}\ket{w_i}$ with $\{\ket{v_i}\}_{i=1}^r,\,\{\ket{w_i}\}_{i=1}^r$ linearly independent families in their own local spaces. As a second corollary, observe that $SR\left(\sum_i \ket{v_i}\ket{v_i^*}\right)=\rk(v)$
\end{enumerate}

Now, consider two positive integers $r,s\leq n$. We have to find two (pure) states $\rho,\sigma$ of Schmidt ranks $r,s$ such that $SR(\rho\circ\sigma)=\min\{n,rs\}$. Choosing a large $N\geq\max\{n,rs\}$, we define $\omega\equiv e^{2\pi i/N}$. Then, for all $0\leq i\leq r-1$ and $0\leq j\leq s-1$, let us introduce the local (unnormalized) states
\begin{equation*} \ket{\alpha_i}\, \equiv\, \sum_{l=0}^{n-1} \omega^{il} \ket{l}\ ,\qquad \ket{\beta_j}\,\equiv\,\sum_{l=0}^{n-1} \omega^{jrl} \ket{l}\, . \end{equation*} 
Using these states as building blocks, we construct
\begin{equation*} \ket{\psi}\, \equiv\, \sum_{i=0}^{r-1} \ket{\alpha_i \alpha^*_i}\ ,\qquad \ket{\varphi}\, \equiv\, \sum_{j=0}^{s-1} \ket{\beta_j \beta^*_j}\, . \end{equation*}
It's easy to see that $\{ \ket{\alpha_i} \}_{0\leq i\leq r-1}$ and $\{ \ket{\beta_j} \}_{0\leq j\leq s-1}$ are two linearly independent families, since all the $\{\omega^i \}_{i=0}^{r-1}$ are distinct, as well as all the $\{\omega^{jr} \}_{j=0}^{s-1}$. Consequently, we see that $SR(\ket{\psi}\!\!\bra{\psi})=r$ and $SR(\ket{\varphi}\!\!\bra{\varphi})=s$. Taking the Hadamard product we obtain
\begin{equation*} \ket{\psi}\!\!\bra{\psi} \circ \ket{\varphi}\!\!\bra{\varphi}\ =\ \ket{\psi\circ\varphi}\!\!\bra{\psi\circ\varphi}\, , \end{equation*}
with
\begin{equation*} \ket{\psi\circ\varphi}\ =\ \sum_{i=0}^{r-1}\sum_{j=0}^{s-1} \ket{\alpha_i\circ\beta_j}\ket{\alpha_i^*\circ\beta_j^*}\, . \end{equation*}
Observe that $\ket{\alpha_i\circ\beta_j}=\sum_{l} \omega^{(i+rj)l} \ket{l}$; therefore, the Schmidt rank of the above state is equal to the rank of the $n\times rs$ matrix $M$ whose entries are given by $M_{l, i+rj}\equiv \omega^{(i+rj)l}$. Since this is a Vandermonde matrix with $rs$ distinct generating numbers, we conclude that $\rk M=\min\{rs,n\}$.

\section{Separability problem in the presence of a symmetry} \label{app sep symm}

Throughout this appendix, we point out some facts that have become part of the common knowledge in the entanglement theory community, and nonetheless we were not able to find clearly stated anywhere. We will be concerned with the behaviour of the separability problem for special classes of states exhibiting certain symmetries. Some of the ideas we review here can be found also in the discussion at the beginning of Chapter 2 of \cite{EggelingPhD}.

First of all, let us fix some notation. The convex cone of separable matrices on a fixed bipartite system will be denoted by $\mathcal{S}$. The central problem is to answer the separability problem for states belonging to a subspace $\mathcal{V}$ of the global space of bipartite hermitian operators $\mathcal{H}(nm;\mathds{C})=\mathcal{H}(n;\mathds{C})\otimes\mathcal{H}(m;\mathds{C})$. We shall always have in mind the paradigmatic example provided by Werner \cite{WernerOrig}, in which $\mathcal{V}$ is composed of all the $(U\otimes U)$--symmetric operators. A definition we find clarifying is the following.

\begin{Def}
Let $\mathcal{V}\subset\mathcal{H}(nm;\mathds{C})$ be a subspace. Then $W\in\mathcal{V}$ is called a \emph{$\mathcal{V}$--separability witness} if
\begin{equation*}
\forall\ \rho\in\mathcal{S}\cap\mathcal{W}\ ,\qquad \text{\emph{Tr}}\, \rho W \geq 0\, .
\end{equation*}
\end{Def}

A well--known fact in standard convex analysis is that any closed convex set can be obtained as the intersection of the closed half--spaces containing it. The first approach to characterize the convex set $\mathcal{S}\cap\mathcal{V}$ could be to apply this theorem directly after the $\mathcal{V}$--section.

\begin{thm} \label{proj HhBh} 
Let $\rho\in\mathcal{V}$ be a bipartite state belonging to the subspace $\mathcal{V}$. Then,
\begin{equation*}
\rho\ \text{is separable}\quad \Longleftrightarrow\quad \text{\emph{Tr}}\,\rho W\geq 0\quad\forall\ \text{$\mathcal{V}$--separability witness}\ W\, .
\end{equation*}
\end{thm}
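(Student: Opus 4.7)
The forward implication is immediate from the definitions: if $\rho\in\mathcal{V}$ is separable, then $\rho\in\mathcal{S}\cap\mathcal{V}$ and every $\mathcal{V}$--separability witness $W$ satisfies $\Tr\rho W\geq 0$ by definition. My plan is to establish the nontrivial direction by contrapositive, combining the standard Hahn--Banach hyperplane separation in the ambient space with an orthogonal projection onto $\mathcal{V}$.

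Suppose $\rho\in\mathcal{V}$ is not separable. Since $\mathcal{S}$ is a closed convex cone in $\mathcal{H}(nm;\mathds{C})$ and $\rho\notin\mathcal{S}$, the unrestricted Woronowicz criterion produces a hermitian operator $W_0\in\mathcal{H}(nm;\mathds{C})$ with $\Tr\sigma W_0\geq 0$ for every separable $\sigma$ and $\Tr\rho W_0<0$. The issue is that in general $W_0\notin\mathcal{V}$, so $W_0$ is not admissible as a $\mathcal{V}$--witness as defined.

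To remedy this, let $P_{\mathcal{V}}:\mathcal{H}(nm;\mathds{C})\to\mathcal{V}$ denote the orthogonal projection with respect to the Hilbert--Schmidt inner product $\langle X,Y\rangle\equiv\Tr(XY)$ (well defined since $\mathcal{V}$ is a real subspace of hermitian operators), and set $W\equiv P_{\mathcal{V}}(W_0)$. Then $W\in\mathcal{V}$ by construction, and since $P_{\mathcal{V}}$ is self--adjoint and fixes every element of $\mathcal{V}$, for any separable $\sigma\in\mathcal{V}$ we get $\Tr\sigma W=\Tr P_{\mathcal{V}}(\sigma)\, W_0=\Tr\sigma W_0\geq 0$, showing that $W$ is a genuine $\mathcal{V}$--separability witness. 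Applying the same manipulation to $\rho\in\mathcal{V}$ yields $\Tr\rho W=\Tr\rho W_0<0$, contradicting the hypothesis that $\Tr\rho W\geq 0$ for all $\mathcal{V}$--witnesses and completing the proof.

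I do not anticipate any real obstacle here. The only nontrivial ingredient is the self--adjointness identity $\Tr X\, P_{\mathcal{V}}(Y)=\Tr P_{\mathcal{V}}(X)\, Y$, which follows at once by expanding $P_{\mathcal{V}}$ in a Hilbert--Schmidt orthonormal basis of $\mathcal{V}$. Notice that the argument requires nothing about $\mathcal{V}$ coming from a group action --- it works for an arbitrary linear subspace --- so the extra content of the symmetry-based statements elsewhere in the paper (e.g.\ the one discussed in the box preceding Theorem \ref{test EB}) lies in the additional separability--preserving property enjoyed specifically by group projections $\mathcal{P}_\mathcal{G}$.
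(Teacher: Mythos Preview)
Your proof is correct. The paper does not actually give a detailed argument here: the sentence preceding the theorem simply says that one should apply the standard convex--analysis fact (every closed convex set is the intersection of the closed half--spaces containing it) \emph{directly inside} $\mathcal{V}$ to the closed convex cone $\mathcal{S}\cap\mathcal{V}$, which immediately produces a separating functional $W\in\mathcal{V}$ with the required properties.

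Your route is a minor but genuine variant: rather than running Hahn--Banach inside $\mathcal{V}$, you invoke the global Woronowicz criterion in the ambient space to get $W_0$, and then push it back into $\mathcal{V}$ via the Hilbert--Schmidt orthogonal projection $P_{\mathcal{V}}$, using self--adjointness of $P_{\mathcal{V}}$ and $P_{\mathcal{V}}(\sigma)=\sigma$ for $\sigma\in\mathcal{V}$ to check both the witness property and the violation on $\rho$. This buys you a concrete construction of the $\mathcal{V}$--witness from a global one and makes the proof self--contained given the familiar entanglement--witness statement; the paper's one--line approach is more direct but less explicit. Your closing remark is also on point: nothing here uses a group structure, and the genuinely new content enters only at the level of Proposition~\ref{cent sect prop}, where the projection is assumed to preserve separability.
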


Observe that a \emph{global} separability witness must be also a $\mathcal{V}$--separability witness. An important fact to realize is that the converse is not true, i.e. a $\mathcal{V}$--separability witness does not need to be a global separability witness. Exactly because of that, restricting the Hahn--Banach theorem does not seem to be a very useful idea. In fact, by doing so we need to find the set of $\mathcal{V}$--separability witnesses, which is in general very different from the set of global witnesses belonging to $\mathcal{V}$. However, a little thought shows that these two sets happen to coincide whenever $\mathcal{V}\cap\mathcal{S}$ is at the same time a section and an orthogonal projection of $\mathcal{S}$ (orthogonality at the level of operators is always intended with respect to the Hilbert--Schmidt product).

\begin{Def}
A subspace $\mathcal{V}\subseteq \mathcal{H}(nm;\mathds{C})$ of a bipartite system is called a \emph{central section} (of $\mathcal{S}$) if there exists a separability--preserving, orthogonal projection of $\mathcal{H}(nm;\mathds{C})$ onto $\mathcal{V}$.
\end{Def}

The above name captures the intuitive idea of how these subspaces look like. For instance, if $\mathcal{S}$ were a sphere in a three--dimensional space, then the only central sections would have been planes passing through its center.

An interesting class of separability--preserving projections includes superoperators of the form $\mathcal{P}_\mathcal{G}=\int_\mathcal{G} dg\ \varphi_1(g)\otimes \varphi_2(g)$, where $\varphi_1:\mathcal{G}\rightarrow \mathcal{L}\big(\mathcal{H}(n;\mathds{C})\big)$ and $\varphi_2:\mathcal{G}\rightarrow \mathcal{L}\big(\mathcal{H}(m;\mathds{C})\big)$ are representations of a compact group $\mathcal{G}$ on the spaces of hermitian matrices, and $\int_\mathcal{G} dg$ is the Haar integral. In this case, $\mathcal{V}$ is the set of fixed points of the representation $\varphi_1\otimes \varphi_2$, that coincides (up to the Choi isomorphism in an orthogonal basis for a $\mathcal{G}$--invariant product) with the set of $\mathcal{G}$--commuting isomorphisms $\mathcal{H}(n;\mathds{C})_{\varphi_1}\rightarrow \mathcal{H}(m;\mathds{C})_{\varphi_2}$, and therefore can be easily determined with the aid of representation theory. Now, the importance of the concept of central section relies primarily on the following elementary observation.

\begin{prop} \label{cent sect prop}
Let $\mathcal{V}\subseteq \mathcal{H}(nm;\mathds{C})$ be a central section. Then any $\mathcal{V}$--separability witness is also a global separability witness.
\end{prop}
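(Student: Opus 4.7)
The plan is to show that, given any globally separable state $\sigma \in \mathcal{S}$ (with no a priori relationship to $\mathcal{V}$) and any $\mathcal{V}$--separability witness $W \in \mathcal{V}$, one has $\text{Tr}(\sigma W) \geq 0$. The only assumption we can exploit is the existence of a separability--preserving, Hilbert--Schmidt orthogonal projection $\mathcal{P}\colon \mathcal{H}(nm;\mathds{C}) \to \mathcal{V}$, whose existence is guaranteed by the hypothesis that $\mathcal{V}$ is a central section.

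The key idea is to insert $\mathcal{P}$ into the trace pairing $\text{Tr}(\sigma W)$ and move it from one side to the other. First, since $\mathcal{P}$ preserves separability, $\mathcal{P}(\sigma) \in \mathcal{S} \cap \mathcal{V}$; hence by the defining property of a $\mathcal{V}$--separability witness, $\text{Tr}[\mathcal{P}(\sigma)\, W] \geq 0$. Second, because $\mathcal{P}$ is an orthogonal projection with respect to the Hilbert--Schmidt inner product $\langle X, Y\rangle_{HS} = \text{Tr}(X^\dag Y)$, it is self--adjoint; combined with $W \in \mathcal{V}$ (so $\mathcal{P}(W) = W$), this yields the identity
$$\text{Tr}[\mathcal{P}(\sigma)\, W]\ =\ \langle \mathcal{P}(\sigma),\, W\rangle_{HS}\ =\ \langle \sigma,\, \mathcal{P}(W)\rangle_{HS}\ =\ \langle \sigma,\, W\rangle_{HS}\ =\ \text{Tr}(\sigma W).$$
Chaining the two observations gives $\text{Tr}(\sigma W) \geq 0$ for every $\sigma \in \mathcal{S}$, which is exactly the definition of a global separability witness.

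The argument is essentially a one--line manipulation once the right ingredients are lined up, so there is no serious obstacle; the only conceptual point worth emphasising is that the \emph{orthogonality} of $\mathcal{P}$ (not merely its projecting onto $\mathcal{V}$) is what makes the adjoint move legitimate, and this is precisely where the definition of central section is used in an essential way. Without self--adjointness of $\mathcal{P}$ the pairing $\text{Tr}[\mathcal{P}(\sigma) W]$ would not simplify to $\text{Tr}(\sigma W)$, and the implication from restricted to global witness would fail.
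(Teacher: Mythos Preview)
Your proof is correct and follows essentially the same approach as the paper: both use that $\mathcal{P}(W)=W$ together with the self--adjointness of the orthogonal projection to rewrite $\text{Tr}(\sigma W)=\text{Tr}(\mathcal{P}(\sigma)W)$, and then invoke the separability--preserving property to conclude. Your additional remark isolating orthogonality as the crucial hypothesis is a nice clarification but does not change the argument.
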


\begin{proof}
Consider a $\mathcal{V}$--separability witness $W$, and denote by $\mathcal{P}$ the separability--preserving projection onto $\mathcal{V}$. Since any $\rho\in\mathcal{S}$ satisfies $\mathcal{P}(\rho)\in\mathcal{S}\cap\mathcal{V}$, we have
\begin{equation*}
\Tr W\rho\, =\, \Tr \mathcal{P}(W)\rho \,=\, \Tr W \mathcal{P}(\rho)\, \geq\, 0\, .
\end{equation*}
\end{proof}

As an immediate consequence, we obtain an improved version of Theorem \ref{proj HhBh}.

\begin{thm} \label{impr HhBh}
Let $\rho\in\mathcal{V}\subseteq\mathcal{H}(nm;\mathds{C})$ be a state belonging to a central section $\mathcal{V}$. Then,
\begin{equation*}
\rho\in\mathcal{S}\quad\Longleftrightarrow\quad \text{\emph{Tr}}\, W\rho\geq 0\quad \forall\ \text{global separability witnesses } W\in\mathcal{V}.
\end{equation*}
\end{thm}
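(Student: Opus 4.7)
The plan is that this improved Hahn--Banach type criterion follows by combining the raw restricted separation result of Theorem \ref{proj HhBh} with the equality of two witness classes established via Proposition \ref{cent sect prop}. In other words, the whole content of the theorem is that, for a central section $\mathcal{V}$, the set of $\mathcal{V}$--separability witnesses coincides with the set of global separability witnesses lying in $\mathcal{V}$, so the two characterisations agree.

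First I would dispatch the easy ($\Rightarrow$) direction: if $\rho\in\mathcal{S}$, then by the very definition of (global) separability witness, $\mathrm{Tr}\, W\rho\geq 0$ holds for every such $W$, regardless of whether $W$ belongs to $\mathcal{V}$ or not.

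For the nontrivial ($\Leftarrow$) direction, assume $\mathrm{Tr}\, W\rho\geq 0$ for every global separability witness $W\in\mathcal{V}$. I would argue that this implies the stronger--looking hypothesis of Theorem \ref{proj HhBh}, namely positivity against \emph{all} $\mathcal{V}$--separability witnesses. Indeed, pick any $\mathcal{V}$--separability witness $W'$. By definition $W'\in\mathcal{V}$, and by Proposition \ref{cent sect prop} $W'$ is in fact a global separability witness. Hence $W'$ belongs to the class against which we are testing $\rho$, so $\mathrm{Tr}\, W'\rho\geq 0$. Applying Theorem \ref{proj HhBh} to $\rho\in\mathcal{V}$ then yields $\rho\in\mathcal{S}$.

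There is essentially no obstacle here beyond invoking the two results correctly; the conceptual work was already done in Proposition \ref{cent sect prop}, where the separability--preserving orthogonal projection onto $\mathcal{V}$ is used to transfer the testing of $W$ against an arbitrary $\sigma\in\mathcal{S}$ to the testing of $W=\mathcal{P}(W)$ against $\mathcal{P}(\sigma)\in\mathcal{S}\cap\mathcal{V}$. The only thing worth flagging is that the argument uses the centrality of $\mathcal{V}$ in exactly one place (to upgrade a $\mathcal{V}$--witness to a global one), and without that hypothesis the containment between the two witness classes would be strict and the theorem would fail.
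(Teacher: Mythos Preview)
Your proof is correct and follows exactly the paper's approach: the $\Rightarrow$ direction is immediate, and the $\Leftarrow$ direction combines Proposition \ref{cent sect prop} (every $\mathcal{V}$--witness is global) with Theorem \ref{proj HhBh}. You have simply spelled out in detail what the paper compresses into a single sentence.
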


\begin{proof}
The implication $\Rightarrow$ is obvious, while the converse follows putting together Theorem \ref{proj HhBh} and Proposition \ref{cent sect prop}.
\end{proof}


\begin{thebibliography}{99}

\bibitem{nielsen} M. A. Nielsen and I. L. Chuang, \textit{Quantum
Computation and Quantum Information} (Cambridge University Press, New York,
2000).

\bibitem{horodeckiqe}  R. Horodecki, P. Horodecki, M. Horodecki and K. Horodecki, Rev. Mod. Phys. {\bf 81}, 865 (2009).

\bibitem{siewert} Ch. Eltschka and J. Siewert, J. Phys. A: Math. Theor. {\bf 47} 424005 (2014).

\bibitem{gurvits} L. Gurvits, {\it Classical deterministic complexity of Edmonds' problem and quantum entanglement} in Proceedings of the thirty-fifth annual ACM symposium on Theory of computing, 10 (2003).

\bibitem{WernerOrig} R. F. Werner, Phys. Rev. Lett. {\bf 40}, 4277 (1989).

\bibitem{Iso-dep} M. Horodecki and P. Horodecki, Phys. Rev. A {\bf 59}, 4206 (1999).

\bibitem{VollbrechtWerner} K. G. H. Vollbrecht and R. F. Werner, Phys. Rev. A {\bfseries 64}, 062307 (2001).

\bibitem{SPA} J. K. Korbicz, M. L. Almeida, J. Bae, M. Lewenstein, and A. Acin, Physical Review A 78, 062105 (2008).

\bibitem{MaxAbelian} D. Chru\'sci\'nski and A. Kossakowski, Phys. Rev. A {\bf 82}, 064301 (2010).

\bibitem{Marcus1} S. M. Hashemi Rafsanjani, M. Huber, C. J. Broadbent and J. H. Eberly, Phys. Rev. A {\bf 86}, 062303 (2012).

\bibitem{axi} C. Eltschka and J. Siewert, Phys. Rev. Lett. {\bf 111} (10) 100503 (2013).

\bibitem{Buchy} L. E. Buchholz, T. Moroder and O. G\"uhne, Ann. Phys. (Berlin), 1521-3889 (2015).

\bibitem{SepBos} N. Yu, arXiv:1501.02957 (2015).

\bibitem{Peres} A. Peres, Phys. Rev. Lett. {\bf 77}, 1413-1415 (1996).

\bibitem{guhnetoth} O. G\"uhne and G. Toth, Physics Reports {\bf 474}, 1 (2009). 

\bibitem{EA3} S. N. Filippov and M. Ziman, Phys. Rev. A {\bfseries 88}, 032316 (2013).

\bibitem{EA4} S. N. Filippov, J. Russ. Laser {\bfseries 35}, 484 (2014).

\bibitem{ChrusKoss} D. Chru\'sci\'nski and A. Kossakowski, Phys. Rev. A {\bf 73}, 062314 (2006).

\bibitem{EA2} S. N. Filippov, T. Rybar and M. Ziman, Phys. Rev. A {\bfseries 85}, 012303 (2012).

\bibitem{EA1} L. Morav{\v c}{\' i}kov{\' a} and M. Ziman, J. Phys. A: Math. Theor. {\bfseries 43}, 275306 (2010).

\bibitem{GurvitsBarnum} L. Gurvits and H. Barnum, Phys. Rev. A {\bfseries 68}, 042312  (2003).

\bibitem{VidalTarrach} G. Vidal and R. Tarrach, Phys. Rev A {\bfseries 59}, 141 (1999).

\bibitem{H&J2} R. A. Horn and C. R. Johnson, \textit{Topics in Matrix Analysis} (Cambridge University Press, Cambridge, 1991).

\bibitem{GMESchur} M. Huber and M. Plesch,  Phys. Rev. A {\bf 83}, 062321 (2011).

\bibitem{Hor x3} M. Horodecki, P. Horodecki and R. Horodecki, Phys. Lett. A {\bf 223}, 1-8 (1996).

\bibitem{EggelingPhD} T. Eggeling, Ph.D. thesis, Braunschweig (2003). Available online at http://d-nb.info/967787947/34 .

\end{thebibliography}
\end{document}